\newtheorem{theorem}{Theorem}[section]
\newtheorem{proposition}[theorem]{Proposition}
\newtheorem{corollary}[theorem]{Corollary}
\newtheorem{lemma}[theorem]{Lemma}
\theoremstyle{definition}
\newtheorem{definition}[theorem]{Definition}
\newtheorem{remark}[theorem]{Remark}
\begin{document}
	
	\title[]{Overhanging of membranes and filaments adhering to periodic graph substrates}
	\author{Tatsuya Miura}
	\address{Graduate School of Mathematical Sciences, University of Tokyo, 3-8-1 Komaba, Meguro, Tokyo, 153-8914 Japan}
	\email{miura@ms.u-tokyo.ac.jp}
	\keywords{Adhesion; Elastica; Obstacle problem; Contact potential; Free boundary problem; Graph representation.}
	%\subjclass[2010]{XXXXX, and XXXXX}

\begin{abstract}
This paper mathematically studies membranes and filaments adhering to periodic patterned substrates in a one-dimensional model.
The problem is formulated by the minimizing problem of an elastic energy with a contact potential on graph substrates.
Global minimizers (ground states) are mainly considered in view of their graph representations.
Our main results exhibit sufficient conditions for the graph representation and examples of situations where any global minimizer must overhang.
\end{abstract}

\maketitle

\section{Introduction}

The figuration of elastic bodies is complicated to comprehend, in particular, if external factors and constraints are taken into consideration.
This paper is devoted to a theoretical study of slender elastic bodies adhering to solid substrates.

The contact and adhesion problems between soft objects and solid substrates appear in various contexts.
For example, complex adhesion patterns are observed when soft nano-objects, as graphene \cite{NiMaPoPLSM11,YaPLHuFuEiCu12} or carbon nanotubes \cite{DeMaApAv01}, are sheeted on rough patterned substrates.
The adhesion property is also known for vesicles (cf.~\cite{SeLi90}).
More broadly, in contact mechanics \cite{Jo87}, it is a central question to ask how elastic bodies contact rough substrates \cite{PeTo75,YaAnMo15}.
This question is relevant for many motivating problems as rubber friction \cite{Pe01} or adhesion in biological systems as geckos \cite{HuGoHoSpAr07,PeGo03,PuLe08}.
Recently, there are remarkable progresses in ``elasto-capillary'' problems \cite{RoBi10}.
The elasto-capillary problems essentially relate to our problem in the sense that they are focused on the competition between elasticity and adhesiveness.

\subsection{Our model}

This paper mathematically studies the adhesion problems of filaments and membranes in a one-dimensional setting, as in \cite{PL08}.
To be more precise, we consider the minimizing problem of the energy
\begin{eqnarray}\label{energy0}
\mathcal{E}[\gamma]=\int_{\gamma}ds\left[\frac{C}{2}\kappa^2(s) + \sigma\left(\gamma(s)\right)\right]
\end{eqnarray}
defined for planar curves $\gamma$.
Here $\kappa$ and $s$ denote the curvature and arc length parameter, respectively.
Admissible curves $\gamma$ (corresponding to elastic bodies) are constrained in the upper side of a given $\lambda$-periodic substrate function $\psi_\lambda$ as in Figure \ref{figsubstrate}.
The constant $C>0$ corresponds to the bending rigidity.
The contact potential $\sigma$ is defined as $\sigma=\sigma_F$ in the free part and $\sigma=\sigma_B$ in the bounded part, where $0<\sigma_B<\sigma_F$ are constants.
The constants $\sigma_B$ and $\sigma_F$ correspond to tension or surface energies.
(See Section \ref{sectprelimi} for details.)

Our energy is a simple generalization of the modified total squared curvature, so-called Euler's elastica energy (see \cite{BeMu04,BrPoWi15,Br92,Li98,Li98b} and also \cite{Lo44,Sa08,Si08}), so that an adhesion effect (contact potential) is included.
Its minimization invokes a free boundary problem of the elastica equation, i.e., the free part of any minimizing curve satisfies the curvature equation $C(\kappa_{ss}+\kappa^3/2)-\sigma_F\kappa=0$.
The free boundary conditions are concerned with curvature jumps (see \cite{PL08} and also \cite{LaLi86,OlFr05,RoBi10,SeLi90}).
Our model can be regarded as an elastic version of wetting problems (cf. \cite{dGBWQu04,Me12}).

Our model concerns only the bending modes of filaments or membranes and neglects the stretching modes.
As mentioned in \cite{PL08}, the underlying physical assumptions are that elastic bodies are sufficiently thin, vary only in one direction, and move along substrates freely (no friction).
The stretching modes should be taken into account in fully two-dimensional models, even for thin films without friction (see e.g.\ \cite{HuRoBi11,YaPLHuFuEiCu12}).

\begin{figure}[tb]
	\begin{center}
		\begin{tabular}{cc}
			\begin{minipage}[b]{0.55\hsize}
				\begin{center}
					\includegraphics[width=40mm]{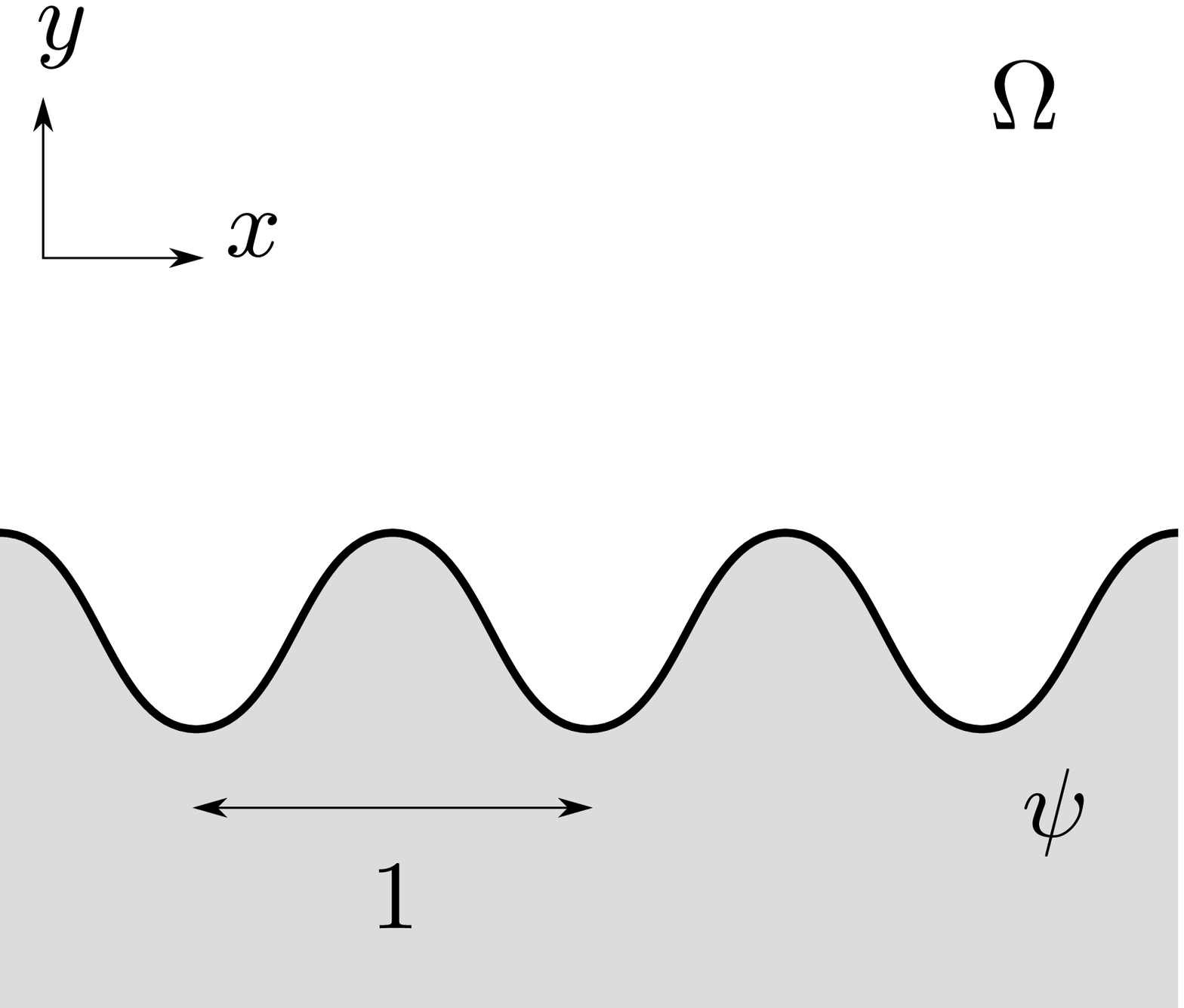}
				\end{center}
			\end{minipage}
			\begin{minipage}[b]{0.40\hsize}
				\begin{center}
					\includegraphics[width=27mm]{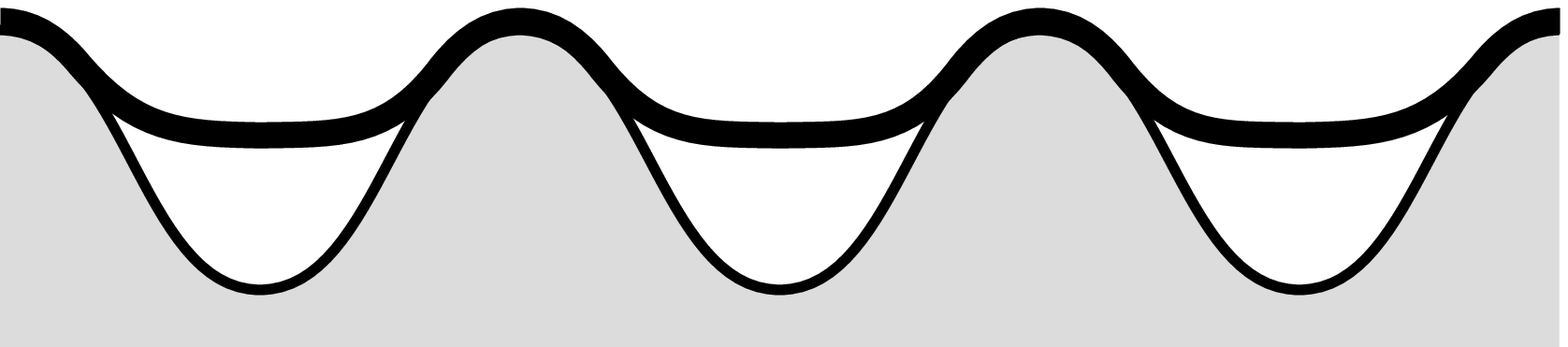}
				\end{center}
				\begin{center}
					\includegraphics[width=27mm]{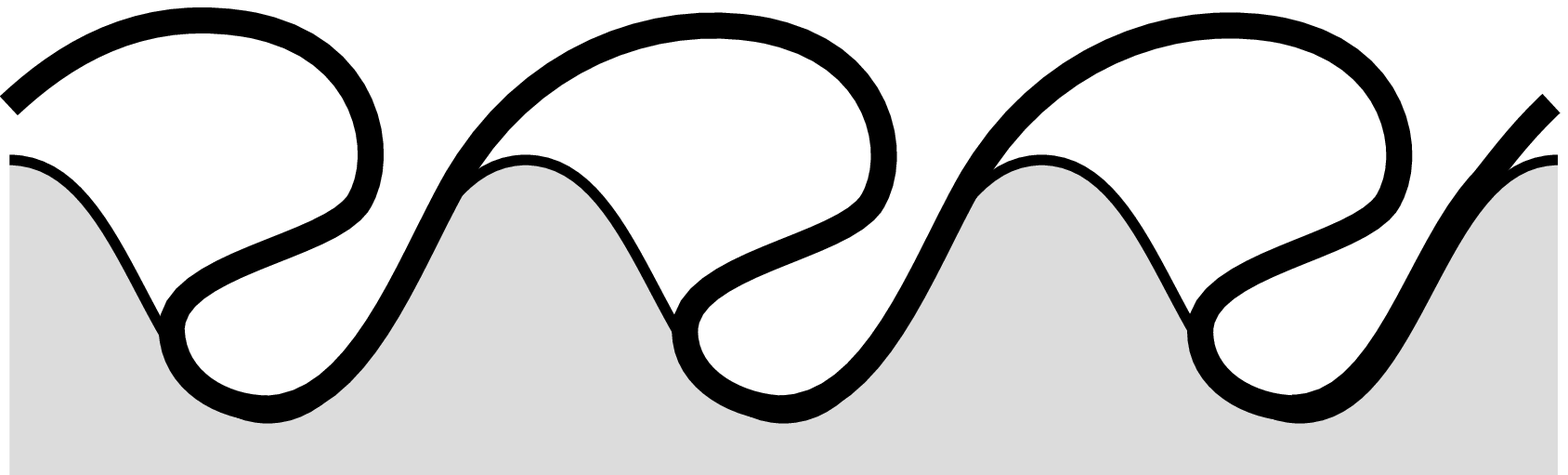}
				\end{center}
				\begin{center}
					\includegraphics[width=27mm]{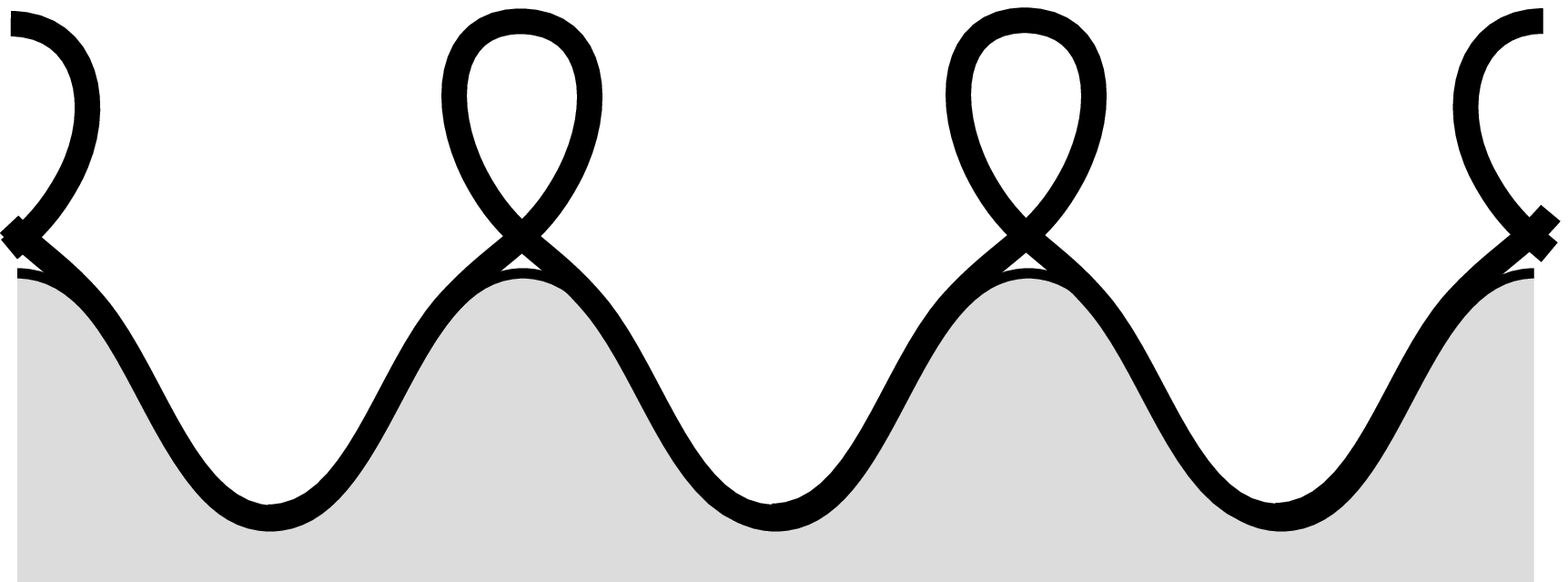}
				\end{center}
			\end{minipage}
		\end{tabular}
		\caption{Periodic substrate function $\psi$ and periodic admissible curves.
			Admissible curves may overhang or self-intersect.}
		\label{figsubstrate}
	\end{center}
\end{figure}

\subsection{Our goal}

The local laws (as the elastica equation or boundary conditions) are well-known in our model since similar models have been widely studied (e.g.\ in \cite{LaLi86,OlFr05,RoBi10,SeLi90}).
The fundamental goal of this paper is to know the whole shapes of minimizers in our model.
However, it is not realistic to determine the exact whole shapes of minimizers for arbitrary parameters and a substrate.
This paper focuses on whether minimizers are represented by the graphs of functions or not.

Whether minimizers are graphs or have overhangs is an important consequence on the shapes.
In fact, the absence of overhangs guarantees that the shape of a solution is not so ``complex'', in particular, there is no self-intersection.
Conversely, the presence of overhangs implies the possibility of self-intersections.
If once membranes or filaments self-intersect, then other mechanisms (not taken into account in our model) may yield more complex shapes as rackets \cite{CoMa03,CrSeBu09,ZhHuLiHwZuGa07} (see also \cite{RoBi10}).

An a priori guarantee of the graph representation is also important for the theoretical study.
Such a guarantee rigorously justifies the graph setting, i.e., the assumption to consider only graph curves as admissible curves.
The graph setting yields strong topological and morphological constraints, and hence makes the analysis considerably simpler.
In fact, there are theoretical studies \cite{Ke16,Mi16,PL08} concerning the whole shapes of minimizers in our model, but all of them rely on the graph setting.
The paper \cite{PL08} particularly depends on the graph setting since its analysis crucially uses the small slope approximation.

\subsection{Main results}

The present paper gives the first rigorous study on the graph representations of global minimizers (ground states).
A theoretical reason to consider only global minimizers is that the shapes of local minimizers (metastable states) may be more complicated even for parameters ensuring the graph representations of global minimizers (see Section \ref{sectdiscuss} for details).
The assumption of global minimality would be however appropriate for some experimental situations, for example, thin films on substrates with wetting fluids at the interfaces (almost no friction) as in \cite{HuRoBi11}.
In addition, as a mathematical assumption, the present paper assumes that curves $\gamma$ and a substrate $\psi_\lambda$ have a same period $\lambda$.

To describe our results, it is convenient to recall the typical length scale $\ell=\sqrt{C/\sigma_F}$, which compares bending rigidity and surface tension.
The scale $\ell$ is called the elasto-capillary length e.g.\ in \cite{HuRoBi11,RoBi10}.
As mentioned in \cite{HuRoBi11,RoBi10}, the scale $\ell$ appears as a typical bending scale of an elastic body.
We also use the length scale $r=\|\psi_\lambda''\|_\infty^{-1}$ which is the reciprocal of the maximum of the second derivative.
The scale $r$ roughly corresponds to the minimal bending scale of $\psi_\lambda$.
Moreover, the dimensionless ratio $\alpha=\sigma_B/\sigma_F$ is also important since it corresponds to adhesiveness.

Global minimizers are flat in many limiting cases; dominant bending effect ($C=\infty$), no adhesion ($\sigma_B=\sigma_F$) or flat substrate ($\psi=0$).
Hence, the graph representation is expected at least nearly the above cases.
Indeed, Theorem \ref{thmnooverhang} and Theorem \ref{thmnooverhang2} give explicit conditions ensuring that global minimizers must be graphs.
The first condition is described as $\alpha^{-1}-1\ll (\ell/\lambda)^2$.
In particular, this condition is satisfied as the limits $C\to\infty$ and $\sigma_B\to\sigma_F$.
The second condition is described as $(r/\lambda)^2\gg \alpha^{-1}+(\ell/\lambda)^{-2}$.
In particular, this condition is satisfied as the limit $r\to\infty$, which means a second order flatness of $\psi_\lambda$.
Our proof uses only energy arguments; we compare the energies of all non-graph curves and special graph competitors.

On the other hand, even if $\psi$ is smooth of class $C^\infty$, it turns out that there are situations such that global minimizers are overhanging, i.e., not represented by graphs.
The mechanism of overhangs is involved, so we deal with only special substrates like ``fakir carpets'' (see the figures in Section \ref{sectoverhang}).
Our result indicates that the wave height length scale $H$ and dimensionless ``deviation'' $\Delta:=\min\{\lambda,H\}/(\lambda+2H)$ of a fakir carpet appear as characteristic quantities.
More precisely, as a main result (Theorem \ref{thmoverhanging}), we rigorously prove that global minimizers must overhang if $\psi_\lambda$ is smooth but shaped like a fakir carpet and moreover the relations $r\ll\ell\ll\min\{\lambda,H\}$ and $\alpha\ll\Delta$ are satisfied.
Our proof is based on a geometric viewpoint to classify possible global states of non-overhanging curves, and an energy estimate for each of the cases.
A special overhanging competitor is then constructed in view of the optimal bending scale $\ell$.
We notice that the condition $r\ll \ell$ requires that $\ell$ is not arbitrary small for overhangs.
However, we also prove that if such substrates are Lipschitz (i.e., folding singularly $r=0$), then $\ell$ can be arbitrary small for a fixed substrate (Theorem \ref{thmoverhanging2}).
To this end, we need further discussion for local bending structure, but we still use only energy arguments.

\subsection{Related mathematical results}

In the rest of this section, reviewing related mathematical literature, we see that in our one-dimensional problem both the contact potential and the total squared curvature play crucial roles for overhangs.

There is much mathematical literature of first order energies with contact potentials on flat substrates (see e.g.\ \cite{AlCa81,CaFr85,Ya92,Ya94} for graphs, \cite{CaFr85,Me12} for the boundary of sets, and references therein).
The problems in the cited papers roughly correspond to our problem with $C=0$ and $\psi_\lambda\equiv0$ (but in higher dimensions).
In first order cases, solutions may have edge singularities at the free boundary and the contact angle $\theta$ satisfies Young's equation $\cos\theta=\sigma_B/\sigma_F$.
In higher dimensional cases, the contact potential may imply the loss of graph representation even in first order cases (cf.\ \cite{Ya92}).
However, although our substrates are not flat, our problem is one-dimensional and periodic, so the graph setting would be still suitable while $C=0$.

To our knowledge, there is little mathematical literature of higher order problems with contact potentials except the aforementioned papers \cite{Ke16,Mi16}.
In \cite{Mi16}, the author obtains an energy expansion as $C\to0$.
The paper \cite{Ke16} studies a discretization of our model and proposes numerical results.
As already mentioned, the papers \cite{Ke16,Mi16} assume that admissible curves are graphs.

The total squared curvature is higher order and a main reason of the loss of graph representation.
In fact, it is well-known that there are non-graph solutions to the elastica equation, which our minimizers obey in the free part (see e.g.\ the figures in \cite{Br92,Lo44}).
Thus, if we impose suitable fixed boundary conditions, it is not difficult to prove that a global minimizer of the modified total squared curvature overhangs.
Our problem is a free boundary problem, and hence the graph representation problem is more involved.

We finally mention that, in dynamical problems of curves (without substrates), the graph representations of solutions have also been concerned.
Although the $L^2$-gradient flow of the length energy (curve shortening flow) preserves the graph property \cite{EcHu91}, one of the modified total squared curvature (curve shortening-straightening flow \cite{Li98b,NoOk14,Ok11,Po96}) may lose in the middle even in the periodic setting \cite{ElMP01}.
However, in such a periodic case (without external factors), stationary global minimizers are only straight lines.
Our problem takes an adhesion effect into account and thus even global minimizers may not be graphs.

\subsection{Organization}

This paper is organized as follows.
Basic notation and definitions are prepared in Section \ref{sectprelimi}.
Section \ref{sectgraph} provides some sufficient conditions for the graph representations of global minimizers.
In Section \ref{sectoverhang}, we prove the existence of situations where global minimizers must overhang.
In Section \ref{sectdiscuss}, we give further discussion on overhangs and also mention self-intersections and local minimizers.
Section \ref{sectconclusion} is the conclusion section.

\section{Preliminaries: curves, energy and quantities}\label{sectprelimi}

In this section we prepare notation of admissible curves and the total energy and then formulate our problem.
For simplicity, throughout this paper, we impose normalizations with respect to the wavelength and tension.
In the last subsection we mention the relation between our normalized problem and original physical quantities.

\subsection{Definition of admissible curves}

Let $\psi:\mathbb{R}\to\mathbb{R}$ be a continuous function with $1$-periodicity, that is, $\psi(x)=\psi(x+1)$ for any $x\in\mathbb{R}$.
Let $\Omega\subset\mathbb{R}^2$ be the strict epigraph of $\psi$:
\begin{eqnarray}
\Omega:=\left\{(x,y)\in\mathbb{R}^2 \mid y>\psi(x)\right\}. \nonumber
\end{eqnarray}
Denote its closure by $\overline{\Omega}=\{y\geq\psi(x)\}$ and the boundary by $\partial\Omega=\{y=\psi(x)\}$.
The set $\overline{\Omega}$ corresponds to the upper side of the substrate $\psi$.

Let $I$ be the open interval $(0,1)$.
We denote by $H^2_{\Omega}$ the set of all curves $\gamma\in H^2(I;\mathbb{R}^2)$ such that $\gamma$ is regular and confined in $\overline{\Omega}$, that is, $|\dot{\gamma}(t)|>0$ and $\gamma(t)\in\overline{\Omega}$ for any $t\in\bar{I}$.
The $H^2$-Sobolev setting corresponds to the square integrability of curvature.
Recall that any regular $H^2$ curve is a regular $C^1$ curve by Sobolev embedding and hence $\gamma$ and $\dot{\gamma}$ are defined pointwise in $\bar{I}$ (including the endpoints).

Moreover we say that a curve $\gamma=(x,y)\in H^2_\Omega$ is admissible if it satisfies the following periodic boundary condition:
\begin{eqnarray}
x(0)=0,\ x(1)=1,\ y(0)=y(1),\ \dot{\gamma}(0)=\dot{\gamma}(1). \nonumber
\end{eqnarray}
We denote the set of admissible curves by $\mathcal{A}\subset H^2_\Omega$.
Remark that the set $\mathcal{A}$ consists of the restrictions to $I$ of regular curves $\gamma=(x,y)\in H^2_{loc}(\mathbb{R};\overline{\Omega})$ such that $x(0)=0$ and $\gamma(t+1)=\gamma(t)+(1,0)\in\mathbb{R}^2$ for any $t\in\mathbb{R}$.

In this setting, admissible curves may have any self-intersections, and thus it would not be compatible with membrane problems.
However, we easily confirm that all the results in this paper are valid in a membrane setting (see Section \ref{sectdiscuss} for details).

\subsection{Definition of the total energy}

For any admissible curve $\gamma\in\mathcal{A}$ we define the length of one period by
\begin{equation}
L_\gamma:=\int_I |\dot{\gamma}(t)|dt\geq1.
\end{equation}
The lower bound follows by the periodicity, in particular, by $x(0)=0$ and $x(1)=1$.

Using the arc length parameterization $0\leq s\leq L_\gamma$, for given constants $\varepsilon>0$ and $0<\alpha<1$, we define the total energy of one period:
\begin{equation}\label{energy1}
E[\gamma]:=\int_0^{L_\gamma} \left[\varepsilon^2|\gamma_{ss}(s)|^2+\Theta(\gamma(s))\right] ds,
\end{equation}
where $\Theta:\overline{\Omega}\to\mathbb{R}$ is defined as $\Theta\equiv1$ in $\Omega$ and $\Theta\equiv\alpha$ on $\partial\Omega$.
Note that $|\gamma_{ss}|^2$ is equal to the squared curvature $\kappa^2$.

Then our problem is formulated as
\begin{eqnarray}\label{minimizingproblem}
\min_{\gamma\in\mathcal{A}}E[\gamma].
\end{eqnarray}
Our purpose is to know the shapes of global minimizers, i.e., curves $\gamma\in\mathcal{A}$ satisfying $E[\gamma]=\min_\mathcal{A}E$.
Hereafter a global minimizer is often called a minimizer simply.

The problem (\ref{minimizingproblem}) is determined by the quantities $\varepsilon>0$ and $0<\alpha<1$ and the substrate $\psi$.
The quantity $\varepsilon$ corresponds to the normalized elasto-capillary length scale (or bending scale) of minimizing curves.
The coefficient $\alpha$ corresponds to adhesiveness.
The smaller $\alpha$ is, the easier the curves become to adhere.
From Section \ref{sectgraph}, changing the parameters $\varepsilon$, $\alpha$ and $\psi$, we consider whether minimizers are represented as graphs or not.

\subsection{Existence of minimizers}

We shall state the existence of solutions to the problem (\ref{minimizingproblem}).

\begin{theorem}\label{thmexistence}
	The problem (\ref{minimizingproblem}) admits a minimizer.
\end{theorem}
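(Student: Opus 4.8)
The plan is to use the direct method of the calculus of variations. Let $m:=\inf_{\gamma\in\mathcal{A}}E[\gamma]$; this infimum is finite and nonnegative, since every admissible curve has $L_\gamma\geq1$ and $\Theta\geq\alpha>0$, so $E[\gamma]\geq\alpha$, while $m<\infty$ because the class $\mathcal{A}$ is nonempty (for instance, the horizontal segment shifted above $\psi$ is admissible when $\psi$ is bounded, which it is by continuity and $1$-periodicity). Take a minimizing sequence $\{\gamma_k\}\subset\mathcal{A}$ with $E[\gamma_k]\to m$. The first step is to secure a uniform length bound: from $E[\gamma_k]\geq\alpha L_{\gamma_k}$ one gets $L_{\gamma_k}\leq E[\gamma_k]/\alpha$, which is bounded. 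This is the key compactness leverage; note that here the contact potential term is doing real work, since it is the only term that controls length.

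The main technical obstacle is that a minimizing sequence need not be parametrized consistently, so before extracting limits I would reparametrize. The clean choice is the constant-speed (proportional to arclength) parametrization on the fixed interval $\bar I=[0,1]$, under which $|\dot\gamma_k|\equiv L_{\gamma_k}$ and the curves become genuinely comparable in a fixed function space. Since $E$ is geometric (invariant under reparametrization), this does not change the energies. With the length bound in hand, $\{\dot\gamma_k\}$ is bounded in $L^\infty$, hence in $L^2(I;\mathbb{R}^2)$. For the second derivatives I would pass to the arclength integral: the bending term gives $\varepsilon^2\int_0^{L_{\gamma_k}}|\kappa|^2\,ds\leq E[\gamma_k]$ bounded, and rewriting $\int_0^1|\ddot\gamma_k|^2\,dt$ in terms of $\int|\kappa|^2\,ds$ using constant speed $|\dot\gamma_k|\equiv L_{\gamma_k}$ (so $\ddot\gamma_k=L_{\gamma_k}^2\gamma_{ss}$ and $\int_0^1|\ddot\gamma_k|^2\,dt=L_{\gamma_k}^3\int|\kappa|^2\,ds$) yields a uniform bound on $\|\ddot\gamma_k\|_{L^2(I)}$. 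Thus $\{\gamma_k\}$ is bounded in $H^2(I;\mathbb{R}^2)$, and after extracting a subsequence, $\gamma_k\rightharpoonup\gamma$ weakly in $H^2$ and strongly in $C^1(\bar I)$ by compact Sobolev embedding.

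It remains to verify that the limit $\gamma$ is admissible and that $E$ is lower semicontinuous along the sequence. For admissibility: the constraint $\gamma_k(t)\in\overline\Omega$ and the periodic boundary conditions $x(0)=0$, $x(1)=1$, $y(0)=y(1)$, $\dot\gamma(0)=\dot\gamma(1)$ all pass to the limit by $C^1$ convergence, since $\overline\Omega$ is closed and pointwise values of $\gamma,\dot\gamma$ converge. The one delicate point is regularity: I must rule out $|\dot\gamma|=0$ somewhere. This is where I would need a lower bound on the limiting speed, i.e.\ that the lengths $L_{\gamma_k}$ do not degenerate; but $L_{\gamma_k}\geq1$ gives $|\dot\gamma|=\lim L_{\gamma_k}\geq1>0$ in the constant-speed gauge, so regularity survives and $\gamma\in H^2_\Omega\cap\mathcal{A}$. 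For lower semicontinuity, the bending term $\int\varepsilon^2|\kappa|^2\,ds$ is a convex functional of the highest-order derivative and is weakly lower semicontinuous in $H^2$, while the potential term $\int\Theta(\gamma)\,ds$ converges: the length element converges from $C^1$ convergence of $\dot\gamma_k$, and $\Theta(\gamma_k(s))\to\Theta(\gamma(s))$ needs care because $\Theta$ is discontinuous across $\partial\Omega$ (it drops from $1$ to $\alpha$ on the contact set). Since $\Theta=\alpha\le 1$ exactly on $\partial\Omega$, I would argue by lower semicontinuity of $\Theta$ under $C^1$ convergence, $\Theta(\gamma(s))\leq\liminf_k\Theta(\gamma_k(s))$ pointwise (a limit point on $\partial\Omega$ can only have approximating points with $\Theta\geq\alpha$), and then apply Fatou; this handles the discontinuity in the favorable direction and yields $E[\gamma]\leq\liminf_k E[\gamma_k]=m$. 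Hence $E[\gamma]=m$ and $\gamma$ is the desired minimizer. I expect this last semicontinuity-through-the-discontinuity step to be the main subtlety, together with the reparametrization bookkeeping that keeps all estimates uniform.
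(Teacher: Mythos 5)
Your strategy coincides step for step with the paper's own proof: constant-speed reparametrization on the fixed interval, the length bound $L_{\gamma_k}\leq E[\gamma_k]/\alpha$ coming from the contact potential, the identity $\int_0^1|\ddot\gamma_k|^2\,dt=L_{\gamma_k}^3\int\kappa^2\,ds$, weak $H^2$ plus strong $C^1$ compactness, lower semicontinuity of $\Theta$ combined with Fatou for the potential term, weak lower semicontinuity of the $L^2$ norm for the bending term, and the observation $|\dot\gamma|=\lim_k L_{\gamma_k}\geq1$ to preserve regularity of the limit. However, there is a genuine gap in your compactness step: you bound $\dot\gamma_k$ in $L^\infty$ and $\ddot\gamma_k$ in $L^2$, but you never bound $\gamma_k$ itself, and then assert boundedness in $H^2(I;\mathbb{R}^2)$. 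The $x$-component is controlled by $x_k(0)=0$ together with the speed bound, but nothing in your argument prevents the curves from escaping vertically: the energy is not invariant under vertical translation, yet the straight lines $\gamma_k(t)=(t,k)$ all belong to $\mathcal{A}$ with $E\equiv1$, so whenever $\inf_\mathcal{A}E=1$ they form a minimizing sequence admitting no subsequence convergent in $C^1$. As written, your extraction of a limit curve can fail.

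The paper closes exactly this hole with a short case distinction: if $\inf_\mathcal{A}E=1$, the trivial straight line is already a minimizer by the upper bound in (\ref{energybound}); otherwise one may assume $E[\gamma_k]<1$ for all $k$, and since a curve never touching $\partial\Omega$ has $\Theta\equiv1$ along it and hence $E[\gamma]\geq L_\gamma\geq1$, every $\gamma_k$ must touch the bounded, $1$-periodic substrate. Combined with the uniform length bound and the normalization $x_k(0)=0$, this yields a uniform bound on $\|\gamma_k\|_\infty$, hence on $\|\gamma_k\|_{L^2}$, completing the $H^2$ bound. (An equivalent repair within your scheme: translate each non-touching $\gamma_k$ downward until it first touches $\partial\Omega$; the translated curve stays in $\overline\Omega$ and its energy does not increase, since $\Theta$ only drops from $1$ to $\alpha$ on the new contact set.) Everything else in your proposal, including your handling of the discontinuity of $\Theta$ through its lower semicontinuity and Fatou's lemma, which is indeed the paper's mechanism, is correct.
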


This is proved by a basic direct method in the calculus of variations.
However, we need some careful arguments to prove it rigorously and thus postpone the precise proof until \ref{appendix1}.
In general, the uniqueness is not expected in this problem.

\subsection{Bounds for minimum}

We have simple bounds for the minimum of $E$:
\begin{eqnarray}\label{energybound}
\alpha \leq \min_{\mathcal{A}} E \leq 1.
\end{eqnarray}
The upper bound follows since the trivial straight line $\bar{\gamma}(t)=(t,c)$, where $c$ is a constant larger than the maximum of $\psi$, belongs to $\mathcal{A}$ and satisfies $E[\bar{\gamma}]=1$.
The lower bound follows since for any $\gamma\in\mathcal{A}$ we have $L_\gamma\geq1$ and also
\begin{eqnarray}\label{energybound2}
E[\gamma] \geq \int_{\gamma}\alpha ds=\alpha L_\gamma.
\end{eqnarray}

\subsection{On normalization of the problem}

As mentioned, our problem is normalized with respect to the wavelength and tension.
To clarify the normalizations, we consider the relation between our normalized problem and the original one.
In the original problem, we only assume that admissible curves and a substrate have a same periodicity $\lambda>0$.

If the substrate $\psi_\lambda$ and admissible curves $\gamma_\lambda$ have a general period $\lambda>0$ and tension in the energy is also a general positive number, then as in Introduction our original problem is the minimization of
\begin{equation}\label{energy2}
\mathcal{E}[\gamma_\lambda]:=\int_0^{L_{\gamma_\lambda}} \left[\frac{C}{2}|(\gamma_\lambda)_{ss}(s)|^2+\sigma(\gamma_\lambda(s))\right] ds,
\end{equation}
where $C>0$ and $\sigma$ is defined as $\sigma\equiv\sigma_F$ in $\{y>\psi_\lambda(x)\}$ and $\sigma\equiv\sigma_B$ on $\{y=\psi_\lambda(x)\}$ with $0<\sigma_B<\sigma_F$.

Normalizing $\psi_\lambda$ and $\gamma_\lambda$ by rescaling as $\psi(x)=\lambda^{-1}\psi_\lambda(\lambda x)$ and $\gamma(s)=\lambda^{-1}\gamma_\lambda(\lambda s)$, we have
\begin{eqnarray}
\mathcal{E}[\gamma_\lambda]=\lambda\sigma_F E[\gamma], \nonumber
\end{eqnarray}
where the dimensionless quantities $\varepsilon>0$ and $0<\alpha<1$ in $E$ are defined as
\begin{eqnarray}\label{quantities}
\varepsilon:=\frac{1}{\lambda}\sqrt{\frac{C}{2\sigma_F}},\quad \alpha:=\frac{\sigma_B}{\sigma_F}.
\end{eqnarray}
Since we only used a similarity transformation, the shapes of curves and a substrate are maintained.
Thus, the minimizing problem of (\ref{energy2}) is equivalent to (\ref{minimizingproblem}) up to rescaling.
We finally recall that $\varepsilon$ has the same scale as the normalized elasto-capillary length scale, i.e., $\varepsilon\sim\ell/\lambda.$

\section{Graph solutions}\label{sectgraph}

In this section we prove that, under suitable conditions for $\varepsilon$, $\alpha$, and $\psi$, the problem (\ref{minimizingproblem}) admits only graph minimizers.

We shall give the definition of graph curves.

\begin{definition}[Graph curves]
	We say that $\gamma=(x,y)\in\mathcal{A}$ is a graph curve if $x'(t)>0$ for any $t\in\bar{I}$.
\end{definition}

\begin{remark}
	By the periodicity, the condition $x'(t)>0$ is equivalent to $x'(t)\neq0$.
	Any graph curve $\gamma$ is represented by an $H^2$ function in the $y$-direction; more precisely, there exists a $1$-periodic function $u\in H^2_{loc}(\mathbb{R})$ such that its graph curve $(\cdot,u(\cdot))\in\mathcal{A}$ is a reparameterization of $\gamma$.
\end{remark}

\subsection{Statements and discussion}

We first observe the following limiting cases; $\varepsilon=\infty$, $\alpha=1$ and $\psi\equiv0$.
We easily notice that in all the cases minimizers are only straight lines.
Indeed, in the case that $\varepsilon=\infty$ or $\alpha=1$, our energy is regarded as the (modified) total squared curvature, which admits only straight line minimizers under the periodicity.
Moreover, in the case that $\psi\equiv0$, it is trivial that a unique minimizer is the completely adhering straight line.

By the above observation, when $\varepsilon\gg1$, $\alpha\approx1$ or $\psi\approx0$, we expect that any minimizer is nearly flat and hence a graph curve.
In fact, the following two statements hold.

\begin{theorem}\label{thmnooverhang}
	Suppose that $(\pi^2\varepsilon^2+1)\alpha\geq1$.
	Then, independently of $\psi$, any minimizer of $(\ref{minimizingproblem})$ is a graph curve.
\end{theorem}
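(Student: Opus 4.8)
The plan is to argue by contradiction and reduce everything to a single substrate-free one-dimensional inequality. Suppose $\gamma=(x,y)\in\mathcal{A}$ is a minimizer that is not a graph curve. Parametrize by arc length and introduce the tangent angle $\theta$, so that $\dot\gamma/|\dot\gamma|=(\cos\theta,\sin\theta)$ and $\kappa=\theta_s$. By the Remark, being a non-graph curve means $x'$ vanishes somewhere, equivalently $\cos\theta(s_0)=0$, i.e. $\theta(s_0)\in\frac{\pi}{2}+\pi\mathbb{Z}$ for some $s_0$: the curve has a vertical tangent (by the reflection symmetry $y\mapsto -y$ we may assume $\theta(s_0)=\frac{\pi}{2}$). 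Since the trivial flat line lies in $\mathcal{A}$ with energy $1$, we have $\min_{\mathcal{A}}E\le 1$, so it suffices to prove that \emph{every} non-graph admissible curve satisfies $E[\gamma]>1$; this contradicts minimality and shows all minimizers are graph curves.

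First I would record an exact decomposition isolating the substrate. Writing $\mathcal{C}=\{s:\gamma(s)\in\partial\Omega\}$ for the contact set and $\mathcal{F}$ for its complement, and using $\Theta=\alpha$ on $\mathcal{C}$, $\Theta=1$ on $\mathcal{F}$, $|\mathcal{C}|=L-|\mathcal{F}|$, together with the periodicity identity $\int_0^L\cos\theta\,ds=x(1)-x(0)=1$, one obtains
\[
E[\gamma]=\alpha+\varepsilon^2\int_0^L\theta_s^2\,ds+\alpha\int_0^L(1-\cos\theta)\,ds+(1-\alpha)\,|\mathcal{F}|.
\]
As $0<\alpha<1$, the last term is nonnegative and may be discarded; crucially, \emph{all} dependence on $\psi$ sits in that discarded term, which is why the conclusion is independent of the substrate. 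Thus the theorem reduces to the purely geometric claim that, for the tangent profile of any non-graph admissible curve,
\[
\varepsilon^2\int_0^L\theta_s^2\,ds+\alpha\int_0^L(1-\cos\theta)\,ds>1-\alpha,
\]
which, under the hypothesis rewritten as $\pi^2\varepsilon^2\alpha\ge 1-\alpha$, is what I would establish.

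The heart is a one-dimensional ``energy-to-turn'' estimate. Pointwise $\mathrm{AM}$--$\mathrm{GM}$ gives $\varepsilon^2\theta_s^2+\alpha(1-\cos\theta)\ge 2\varepsilon\sqrt{\alpha}\,\sqrt{1-\cos\theta}\,|\theta_s|$, so by the coarea formula the left-hand side is at least $2\varepsilon\sqrt{\alpha}\int \sqrt{1-\cos\vartheta}\,N(\vartheta)\,d\vartheta$, where $N(\vartheta)=\#\{s:\theta(s)=\vartheta\}$. Since $\theta$ attains the vertical value $\frac{\pi}{2}$ while the constraint $\int_0^L\cos\theta\,ds=1>0$ forces $\theta$ to spend substantial length at angles with positive cosine and (for a closed tangent) to return from $\frac{\pi}{2}$, one gets $N\ge 2$ on a definite range of angles below $\frac{\pi}{2}$, producing a lower bound of the form $c\,\varepsilon\sqrt{\alpha}$ with an explicit constant. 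Combining this with the hypothesis in the form $\varepsilon\sqrt{\alpha}\ge\frac{1}{\pi}\sqrt{1-\alpha}$ and the elementary bound $\sqrt{1-\alpha}\ge 1-\alpha$ then upgrades the estimate to the required strict inequality $>1-\alpha$.

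I expect the main obstacle to be exactly the multiplicity estimate in the last step: converting the heuristic ``$\theta$ reaches $\frac{\pi}{2}$ and comes back'' into a rigorous, quantitatively sufficient lower bound on $N(\vartheta)$ that is uniform over all non-graph curves and sharp enough to match the clean threshold $\pi^2$. The difficulty is the coupling between how far $\theta$ must dip toward $0$ (controlled by $\int_0^L\cos\theta\,ds=1$, hence by the length $L$) and how much turning is forced; curves that remain nearly vertical everywhere are ruled out precisely because they cannot satisfy $\int_0^L\cos\theta\,ds=1$, and this quantitative exclusion is what must be made efficient. One must also treat the borderline case of a bare vertical tangent (no strict overhang) as the limiting configuration, and verify that the resulting bound is genuinely uniform in $\varepsilon$ and $\alpha$ so that the single inequality above closes the argument in all regimes.
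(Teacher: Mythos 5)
Your opening reduction is sound and, in substance, matches the paper: since $\Theta\geq\alpha$ and $\int_0^{L_\gamma}\cos\theta\,ds=x(1)-x(0)=1$, your decomposition is an exact identity (note $\alpha\int_0^{L_\gamma}(1-\cos\theta)\,ds=\alpha(L_\gamma-1)$), discarding the term $(1-\alpha)|\mathcal{F}|\geq0$ is exactly how $\psi$ drops out, and the reduced claim $\varepsilon^2\int\theta_s^2\,ds+\alpha(L_\gamma-1)>1-\alpha$ is equivalent to Proposition \ref{propnooverhang}. The genuine gap is the one you flag yourself, the multiplicity bound, and the lever you propose for it is demonstrably insufficient. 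From $\int\cos\theta\,ds=1$ alone you cannot prevent the tangent angle from hovering near the vertical: for a hypothetical profile confined to $[\pi/4,\pi/2]$, your coarea bound yields at most
\begin{equation*}
2\varepsilon\sqrt{\alpha}\cdot 2\int_{\pi/4}^{\pi/2}\sqrt{1-\cos\vartheta}\,d\vartheta\approx 2.45\,\varepsilon\sqrt{\alpha},
\end{equation*}
which at the borderline $\varepsilon\sqrt{\alpha}=\sqrt{1-\alpha}/\pi$ is about $0.78\sqrt{1-\alpha}$ and falls below $1-\alpha$ already at $\alpha=0.1$; the compensating length penalty cannot be added, because $\alpha(1-\cos\theta)$ was already spent in the AM--GM (and splitting it ruins the thin margin). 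What excludes such profiles is a constraint your sketch never invokes: the closure in the $y$-direction, $y(0)=y(1)$, i.e.\ $\int_0^{L_\gamma}\sin\theta\,ds=0$. A non-graph admissible curve is not a horizontal segment, so $\sin\theta$ must take both signs; hence the continuous lift of $\theta$ attains both a value in $\frac{\pi}{2}+\pi\mathbb{Z}$ and a value with $\sin\theta<0$, and (in the zero-winding case; nonzero winding gives more) must sweep \emph{twice} a full interval of length $\pi/2$ abutting a vertical angle. The least favorable such interval is of type $(0,\pi/2)$, giving $N\geq2$ there and the explicit constant
\begin{equation*}
c=4\int_0^{\pi/2}\sqrt{1-\cos\vartheta}\,d\vartheta=8\sqrt{2}-8\approx3.314>\pi,
\end{equation*}
after which $c\,\varepsilon\sqrt{\alpha}\geq\frac{8\sqrt2-8}{\pi}\sqrt{1-\alpha}>1-\alpha$ does close the argument --- but with only about a five-percent margin, so nothing weaker than this exact sweep statement will do.

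For comparison, the paper's proof of Proposition \ref{propnooverhang} bypasses the coarea/Modica--Mortola machinery entirely: if $L_\gamma>1/\alpha$ then $E[\gamma]>1$ directly by (\ref{energybound2}) (in your decomposition, $\alpha(L_\gamma-1)>1-\alpha$), and otherwise the Cauchy--Schwarz inequality $\int\kappa^2\,ds\geq\bigl(\int|\kappa|\,ds\bigr)^2/L_\gamma$ together with the fact that the total variation of $\theta$ exceeds $\pi$ for any non-graph periodic curve --- precisely the $\sin$/$\cos$ closure observation above, but used only qualitatively --- gives $E[\gamma]>(\pi^2\varepsilon^2+1)\alpha\geq1$. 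So your route can be completed once the $y$-closure lemma is made precise, but it is strictly more delicate than the paper's short Cauchy--Schwarz argument, which reaches the same threshold $\pi^2$ without the multiplicity function or any sharp constant.
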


\begin{theorem}\label{thmnooverhang2}
	Suppose that $\psi$ is of class $C^2$ and $\|\psi''\|_\infty^2\leq\frac{8\pi^2}{8/\alpha+1/\varepsilon^2}$.
	Then any minimizer of (\ref{minimizingproblem}) is a graph curve.
\end{theorem}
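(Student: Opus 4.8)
The plan is to argue by contradiction using only energy comparison, exactly in the spirit announced in the introduction: if some minimizer $\gamma$ were \emph{not} a graph, I would exhibit an admissible \emph{graph} competitor of strictly smaller energy, contradicting minimality (a minimizer exists by Theorem \ref{thmexistence}). The argument splits into two independent estimates. First, a universal lower bound for non-graph curves, namely that every non-graph $\gamma\in\mathcal{A}$ satisfies $E[\gamma]\ge(\pi^2\varepsilon^2+1)\alpha$; this is the same bound that underlies Theorem \ref{thmnooverhang}. Second, an upper bound for the energy of the fully adhering substrate curve, which serves as the competitor. The hypothesis on $\|\psi''\|_\infty$ is precisely what forces the second quantity below the first.

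For the competitor I would take the graph of the substrate itself, $\gamma_\psi:=(\cdot,\psi)$. Since $\psi\in C^2$ is $1$-periodic, $\gamma_\psi$ is a regular admissible curve lying on $\partial\Omega$, and it is a graph curve because its first component has derivative $1>0$. Writing $M:=\|\psi''\|_\infty$ and using $\Theta\equiv\alpha$ along $\partial\Omega$, the energy separates as $E[\gamma_\psi]=\varepsilon^2\int_0^1\frac{\psi''^2}{(1+\psi'^2)^{5/2}}\,dx+\alpha\int_0^1\sqrt{1+\psi'^2}\,dx$. The bending term is at most $\varepsilon^2\int_0^1\psi''^2\,dx\le\varepsilon^2 M^2$, since $(1+\psi'^2)^{-5/2}\le1$. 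For the length term I would first bound the slope: because $\psi'$ is $1$-periodic with zero mean, its maximum and minimum are attained within arc-distance $1/2$ of a zero of $\psi'$, whence $\|\psi'\|_\infty\le\tfrac12 M$; then $\sqrt{1+t}\le1+\tfrac12 t$ gives $\alpha\int_0^1\sqrt{1+\psi'^2}\,dx\le\alpha+\tfrac{\alpha}{2}\|\psi'\|_\infty^2\le\alpha+\tfrac{\alpha M^2}{8}$. Altogether $E[\gamma_\psi]\le\alpha+\tfrac{\alpha M^2}{8}+\varepsilon^2 M^2$.

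The harder ingredient, and the real obstacle, is the lower bound $E[\gamma]\ge(\pi^2\varepsilon^2+1)\alpha$ for non-graph curves. I would work in the arc-length tangent-angle formulation, writing $\gamma_s=(\cos\theta,\sin\theta)$ so that the bending term equals $\varepsilon^2\int_0^{L_\gamma}\theta_s^2\,ds$ and the potential term is at least $\alpha L_\gamma$. Periodicity imposes $\int_0^{L_\gamma}\cos\theta\,ds=1$ and $\int_0^{L_\gamma}\sin\theta\,ds=0$, while the non-graph hypothesis means $\cos\theta$ vanishes somewhere, i.e.\ $\theta=\pm\pi/2$ at some point. The first constraint already forces $L_\gamma>1$ strictly. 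The crux is then to minimize $\varepsilon^2\int\theta_s^2\,ds+\alpha L_\gamma$ subject to these constraints: one has to quantify how much total curvature is unavoidable when the tangent must swing from its average horizontal direction up to the vertical and back while keeping zero net vertical displacement, and to optimize the resulting bending--length trade-off over the free length $L_\gamma$. Carrying out this optimization is where the constant $\pi^2$ is produced; I expect this to be the technical heart, shared with the proof of Theorem \ref{thmnooverhang}.

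Finally I would combine the two estimates. If $\gamma$ is a non-graph minimizer then $(\pi^2\varepsilon^2+1)\alpha\le E[\gamma]\le E[\gamma_\psi]\le\alpha+\tfrac{\alpha M^2}{8}+\varepsilon^2 M^2$, which rearranges to $M^2\big(\tfrac{\alpha}{8}+\varepsilon^2\big)\ge\pi^2\varepsilon^2\alpha$, i.e.\ $\|\psi''\|_\infty^2\ge\frac{8\pi^2}{8/\alpha+1/\varepsilon^2}$. Under the strict version of the hypothesis this is impossible, so every minimizer is a graph. The borderline case of equality would be handled by noting that the competitor estimate is strict whenever $\psi$ is nonconstant (for instance $\sqrt{1+t}<1+\tfrac12 t$ for $t>0$), while for constant $\psi$ the unique minimizer is the adhering line; this removes the remaining boundary ambiguity.
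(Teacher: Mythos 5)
Your overall architecture is exactly the paper's: the fully adhering competitor $\tilde\gamma=(\cdot,\psi)$, the slope bound $2\|\psi'\|_\infty\le\|\psi''\|_\infty$ from $1$-periodicity, the elementary estimates giving $E[\tilde\gamma]\le\bigl(\tfrac{8\varepsilon^2+\alpha}{8\alpha}\|\psi''\|_\infty^2+1\bigr)\alpha$, and the comparison of this with the lower bound $(\pi^2\varepsilon^2+1)\alpha$ for non-graph curves. However, there is a genuine gap in the lemma you rely on: the universal bound ``every non-graph $\gamma\in\mathcal{A}$ satisfies $E[\gamma]\ge(\pi^2\varepsilon^2+1)\alpha$'' is \emph{false} as stated. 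For fixed $\alpha$ and large $\varepsilon$, take a straight line above the substrate decorated with one small smooth loop of radius comparable to $\varepsilon$: this is an admissible non-graph curve with energy of order $1+C\varepsilon$, whereas $(\pi^2\varepsilon^2+1)\alpha$ grows like $\pi^2\alpha\varepsilon^2$. Relatedly, the optimization you propose as the ``technical heart'' cannot produce the claimed constant: minimizing $\pi^2\varepsilon^2/L+\alpha L$ freely over $L\ge1$ yields only $2\pi\varepsilon\sqrt{\alpha}$ (the paper records precisely this weaker bound in a remark after the proofs, noting it is not sharp enough to cover $\alpha\approx1$ with $\varepsilon$ fixed), and $2\pi\varepsilon\sqrt{\alpha}<(\pi^2\varepsilon^2+1)\alpha$ whenever $\pi\varepsilon\sqrt{\alpha}$ lies outside $[\,1-\sqrt{1-\alpha},\,1+\sqrt{1-\alpha}\,]$.

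The missing idea is a length cap before optimizing. The paper's Proposition \ref{propnooverhang} first uses the crude bound $E[\gamma]\ge\alpha L_\gamma$ to conclude $E[\gamma]>1$ whenever $L_\gamma>1/\alpha$, which already beats the straight-line upper bound $\min_{\mathcal{A}}E\le1$; for the remaining curves with $1\le L_\gamma\le1/\alpha$, Cauchy--Schwarz plus total absolute curvature $>\pi$ gives $E[\gamma]>\varepsilon^2\pi^2\alpha+\alpha$ by using $L_\gamma\le1/\alpha$ in the bending term and $L_\gamma\ge1$ in the length term \emph{separately}, rather than optimizing a single expression in $L_\gamma$. The resulting correct statement is $E[\gamma]>\min\{1,(\pi^2\varepsilon^2+1)\alpha\}$, with the $\min$ you dropped; one then reduces to the regime $(\pi^2\varepsilon^2+1)\alpha<1$ by invoking Theorem \ref{thmnooverhang} (which settles the complementary regime independently of $\psi$), after which the $\min$ is the quantity you want and your competitor comparison closes the argument exactly as in the paper. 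With this repair your proof coincides with the paper's; note also that your borderline-equality discussion becomes unnecessary, since the proposition's lower bound is strict, so the non-strict hypothesis $\frac{8\varepsilon^2+\alpha}{8\alpha}\|\psi''\|_\infty^2\le\pi^2\varepsilon^2$ already yields $E[\gamma]>E[\tilde\gamma]$.
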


\begin{remark}
	Theorem \ref{thmnooverhang} immediately implies that, if we fix $\varepsilon$ and take $\alpha\approx1$, or fix $\alpha$ and take $\varepsilon\gg1$, then any minimizer is a graph curve.
	
	In view of the original physical quantities (\ref{quantities}), the condition that $(\pi^2\varepsilon^2+1)\alpha\geq1$ in Theorem \ref{thmnooverhang} is read as
	\begin{eqnarray}
	\frac{\pi^2}{2}\frac{C}{\lambda^2\sigma_F}=\frac{\pi^2}{2}\left(\frac{\ell}{\lambda}\right)^2 \geq\frac{\sigma_F}{\sigma_B}-1. \nonumber
	\end{eqnarray}
	This is enough to indicate the following two qualitative features: if the effect of adhesion is weak ($\sigma_B\to\sigma_F$), or the effect of bending is strong ($C\to\infty$), then any minimizer must be a graph curve.
\end{remark}

\begin{remark}
	Theorem \ref{thmnooverhang2} states that, for any $\varepsilon$ and $\alpha$ which may be small, if the substrate $\psi$ is sufficiently flat in a second order sense $\psi''\approx0$, then our problem still admits only graph curve minimizers.
	
	Recall that the sup norm $\|\psi''\|_\infty=\max_{x\in\mathbb{R}}|\psi''(x)|$ is also a dimensionless quantity since $\|\psi''\|_\infty=\lambda/r$, where $r=\|\psi_\lambda''\|_\infty^{-1}$ corresponds to the minimal bending scale of the original substrate $\psi_\lambda$.
	By (\ref{quantities}), the condition in Theorem \ref{thmnooverhang2} can be also expressed by the original quantities.
\end{remark}

\subsection{Proof of graph representation}

In this section we prove Theorem \ref{thmnooverhang} and Theorem \ref{thmnooverhang2}.
We first obtain a lower bound for the energies of non-graph curves.
This is a key step to prove our theorems.

\begin{proposition}\label{propnooverhang}
	Any non-graph curve $\gamma\in\mathcal{A}$ satisfies
	\begin{eqnarray}
	E[\gamma] > \min\left\{1,(\pi^2\varepsilon^2+1)\alpha\right\}. \nonumber
	\end{eqnarray}
\end{proposition}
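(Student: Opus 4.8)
The plan is to reduce the whole inequality to the level of the tangential angle and then to an energy tradeoff on that level. Parametrizing a non-graph curve $\gamma=(x,y)\in\mathcal{A}$ by arc length and writing its unit tangent as $\gamma_s=(\cos\theta,\sin\theta)$ for a continuous lift $\theta$ (which is $H^1$ because $\gamma\in H^2$), I would record that $\kappa=\theta_s$, that admissibility forces the closing relations $\int_0^{L_\gamma}\cos\theta\,ds=1$, $\int_0^{L_\gamma}\sin\theta\,ds=0$ together with $\theta(L_\gamma)-\theta(0)\in2\pi\mathbb{Z}$, and that the non-graph hypothesis means $x_s=\cos\theta$ vanishes somewhere. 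Since $x_s$ is periodic with positive mean $1/L_\gamma$ but not everywhere positive, $\cos\theta$ must in fact change sign, so $\theta$ crosses an odd multiple of $\pi/2$: the curve possesses a genuine vertical tangent. Bounding the contact potential from below by the only estimate available uniformly in $\psi$, namely $\Theta\geq\alpha$, I would reduce the statement to the purely tangential inequality
\begin{equation}
\varepsilon^2\int_0^{L_\gamma}\theta_s^2\,ds+\alpha L_\gamma>\min\bigl\{1,(\pi^2\varepsilon^2+1)\alpha\bigr\}. \nonumber
\end{equation}

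The heart of the matter is a quantitative tradeoff: driving $\theta$ up to a vertical value while respecting the horizontal budget $\int_0^{L_\gamma}\cos\theta\,ds=1$ must cost either extra length or extra bending. I would make this precise through a Wirtinger/Cauchy--Schwarz type estimate, writing $L_\gamma-1=\int_0^{L_\gamma}(1-\cos\theta)\,ds$ and comparing the concentration of $1-\cos\theta$ near the vertical point against $\int_0^{L_\gamma}\theta_s^2\,ds$; the constant $\pi^2$ should enter as a Dirichlet-type eigenvalue quantifying the cheapest way to turn the tangent between the horizontal bulk and the vertical tangent, and equality there is never attained for an admissible (hence genuinely non-flat) configuration, which is where the strict inequality comes from.

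With such a coupling between $\int_0^{L_\gamma}\theta_s^2\,ds$ and $L_\gamma$ in hand, I would finish by a case split matching the two entries of the minimum, which are exactly the energies of the two natural competitors: the detached flat line, with energy $1$, and a nearly flat, essentially adhered configuration, with energy of order $(\pi^2\varepsilon^2+1)\alpha$. In the regime where the length term is already large, $\alpha L_\gamma$ alone (or together with a little bending) forces $E[\gamma]>1$; in the complementary regime an optimization in $L_\gamma$ of the coupled bound produces the branch $(\pi^2\varepsilon^2+1)\alpha$. Since it suffices to beat whichever of the two numbers is smaller, the two regimes assemble into the stated $\min$, and the non-attainment of the extremal configuration upgrades each estimate to a strict inequality.

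The step I expect to be the main obstacle is pinning down the sharp constant together with a complete topological case analysis of how a periodic, closed-up, non-graph curve can realize a vertical tangent: a simple overhang where $\theta$ rises to near $\pi$ and returns, a full self-intersecting loop where $\theta$ sweeps through $2\pi$, and curves with several vertical tangents all behave somewhat differently, and I must ensure the tradeoff inequality (and the precise coefficient $\pi^2$) holds uniformly across all of them, using the closing relation $\int_0^{L_\gamma}\sin\theta\,ds=0$ to rule out a cheap single bump that never closes up in the $y$-direction. Secondary technical points will be the regularity of the lift $\theta$ at the endpoints and making the strictness fully rigorous; by contrast, the reduction to the tangential problem and the final optimization should be routine once the key inequality is established.
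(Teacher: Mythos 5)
Your skeleton is in fact the paper's: bound the potential below by $\Theta\geq\alpha$, dispose of long curves first (if $L_\gamma>1/\alpha$ then $E[\gamma]\geq\alpha L_\gamma>1$ already), and for $1\leq L_\gamma\leq1/\alpha$ couple the bending term to the length; your instinct that the closing relation $\int_0^{L_\gamma}\sin\theta\,ds=0$ is what kills the ``cheap single bump'' and produces strictness is also exactly right. The genuine gap is the step you yourself flag as the main obstacle, and the mechanism you propose for it points the wrong way. No Wirtinger or Dirichlet-eigenvalue comparison, no analysis of how $1-\cos\theta$ concentrates, and no topological case analysis are needed: the paper's key fact is simply that every non-graph curve in $\mathcal{A}$ has total absolute curvature $\int_0^{L_\gamma}|\theta_s|\,ds$ \emph{strictly} larger than $\pi$, after which Cauchy--Schwarz gives $\int_\gamma\kappa^2\,ds\geq\bigl(\int_\gamma|\kappa|\,ds\bigr)^2/L_\gamma>\pi^2/L_\gamma$. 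So $\pi^2$ enters as (total turning)$^2$, a scale-free quantity, not as an eigenvalue --- an eigenvalue constant scales with the interval and could not give the uniform $\pi^2$ your plan requires. Also, your final ``optimization in $L_\gamma$'' is not what happens: literally minimizing $\varepsilon^2\pi^2/L_\gamma+\alpha L_\gamma$ yields only the AM--GM value $2\pi\varepsilon\sqrt{\alpha}$ (the paper's closing remark, too weak for the $\alpha\approx1$ regime); the branch $(\pi^2\varepsilon^2+1)\alpha$ comes from bounding the two terms at \emph{opposite} ends of the window, $\varepsilon^2\pi^2/L_\gamma\geq\varepsilon^2\pi^2\alpha$ and $\alpha L_\gamma\geq\alpha$.

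The total-curvature fact is quick and uniform across all the cases you were worried about. Non-graph means only that $x'$ vanishes somewhere (by the Remark after the definition of graph curves, $x'>0$ everywhere is equivalent to $x'\neq0$ everywhere), so $\cos\theta(s_0)=0$ for some $s_0$; note that your assertion that $\cos\theta$ must \emph{change sign} is false --- a vertical tangent without overhang is admissible and non-graph --- but harmless, since the vertical tangent is all one uses. If the tangent winds, i.e.\ $\theta(L_\gamma)-\theta(0)\in2\pi\mathbb{Z}\setminus\{0\}$, the total variation is at least $2\pi$. Otherwise $\theta$ is periodic, so its total variation around the circle is at least twice its oscillation; were the total variation $\leq\pi$, the range of $\theta$ would lie in an interval of length $\pi/2$ containing $\theta(s_0)\in\tfrac{\pi}{2}+\pi\mathbb{Z}$, hence inside some $[m\pi,(m+1)\pi]$, forcing $\sin\theta$ to have a single sign with $|\sin\theta(s_0)|=1$, which contradicts $\int_0^{L_\gamma}\sin\theta\,ds=y(1)-y(0)=0$. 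This one estimate covers simultaneously the single overhang, the full loop, and multiple vertical tangents --- precisely what makes the paper's proof of Proposition \ref{propnooverhang} a few lines, where your plan leaves its core lemma open and aimed at an unnecessarily hard target.
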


\begin{proof}
	By (\ref{energybound2}), any curve $\gamma\in\mathcal{A}$ with $L_\gamma>1/\alpha$ satisfies $E[\gamma]\geq\alpha L_\gamma>1$.
	Thus it suffices to prove that any non-graph curve $\gamma\in\mathcal{A}$ with $1\leq L_\gamma\leq 1/\alpha$ satisfies $E[\gamma]>(\pi^2\varepsilon^2+1)\alpha$.
	By the Cauchy-Schwarz inequality, we have
	\begin{eqnarray}
	E[\gamma] \geq \varepsilon^2\int_{\gamma}\kappa^2ds +\alpha L_\gamma \geq \frac{\varepsilon^2}{L_\gamma}\left(\int_\gamma|\kappa| ds \right)^2 +\alpha L_\gamma. \nonumber
	\end{eqnarray}
	Moreover, since $\gamma$ is non-graph and periodic, its total absolute curvature (i.e., the total variation of its tangential angle) is larger than $\pi$.
	Noting that $1\leq L_\gamma\leq 1/\alpha$, we have
	\begin{eqnarray}
	E[\gamma] > \frac{\varepsilon}{1/\alpha}\pi^2 +\alpha=(\pi^2\varepsilon^2+1)\alpha. \nonumber
	\end{eqnarray}
	The proof is complete.
\end{proof}

We are in position to prove the theorems.

\begin{proof}[Proof of Theorem \ref{thmnooverhang}]
	Proposition \ref{propnooverhang} and the assumption $(\pi^2\varepsilon^2+1)\alpha\geq1$ imply that any non-graph curve $\gamma\in\mathcal{A}$ satisfies $E[\gamma]>1$.
	By the upper bound in (\ref{energybound}), such a curve is not a minimizer.
\end{proof}

\begin{proof}[Proof of Theorem \ref{thmnooverhang2}]
	By Theorem \ref{thmnooverhang}, we may assume $(\pi^2\varepsilon^2+1)\alpha<1$.
	Thus, by Proposition \ref{propnooverhang}, we only need to prove that any $\gamma\in\mathcal{A}$ such that $E[\gamma]>(\pi^2\varepsilon^2+1)\alpha$ is not a minimizer.
	
	We compare such $\gamma$ with the completely adhering competitor $\tilde{\gamma}:=(\cdot,\psi(\cdot))\in\mathcal{A}$, which is a graph curve since $\psi$ is of class $C^2$.
	Noting that $2\|\psi'\|_\infty\leq \|\psi''\|_\infty$ by the $1$-periodicity, we find that the curve $\tilde{\gamma}$ satisfies
	\begin{eqnarray}
	E[\tilde{\gamma}] &=& \varepsilon^2\int_I \frac{|\psi''|^2}{(1+|\psi'|^2)^{5/2}} + \alpha\int_I \sqrt{1+|\psi'|^2} \nonumber\\
	&\leq& \varepsilon^2\|\psi''\|_\infty^2 + \alpha\left(1+\frac{1}{2}\left(\frac{\|\psi''\|_\infty}{2}\right)^2\right). \nonumber\\
	&=& \left(\frac{8\varepsilon^2+\alpha}{8\alpha}\|\psi''\|_\infty^2+1\right)\alpha. \nonumber
	\end{eqnarray}
	The assumption on $\|\psi''\|_\infty$ immediately implies that $\frac{8\varepsilon^2+\alpha}{8\alpha}\|\psi''\|_\infty^2 \leq \pi^2\varepsilon^2$ and hence $E[\gamma]>E[\tilde{\gamma}]$.
	Therefore, the curve $\gamma$ does not minimize $E$.
\end{proof}

\begin{remark}
	In the proof of Proposition \ref{propnooverhang}, using the inequality of arithmetic and geometric means, we also have another type of lower bound as
	\begin{eqnarray}
	E[\gamma]>\frac{\varepsilon^2}{L_\gamma}\pi^2+\alpha L_\gamma \geq 2\pi\varepsilon\sqrt{\alpha}. \nonumber
	\end{eqnarray}
	Thus the condition in Theorem \ref{thmnooverhang} can be replaced by $4\varepsilon^2\alpha\geq1$.
	Although this condition is meaningful quantitatively, it is not sharp enough to obtain the qualitative property that any minimizer is a graph curve for any fixed $\varepsilon$ and $\alpha\approx1$.
\end{remark}

\section{Overhanging solutions}\label{sectoverhang}

In this section we show that there is a combination of $\varepsilon$, $\alpha$, and $\psi$ such that any minimizer must overhang.

\begin{definition}[Overhanging]
	We say that a curve $\gamma=(x,y)\in\mathcal{A}$ is overhanging if there exists $t\in\bar{I}$ such that $x'(t)<0$.
\end{definition}

\begin{remark}
	By the periodicity of $\gamma\in\mathcal{A}$, there is $t\in\bar{I}$ such that $x'(t)>0$ in general, and thus any overhanging curve must have ``turns'' in the $x$-direction.
\end{remark}

Heuristically, overhanging solutions should appear in order to circumvent sharp mountain folds of substrates as in Figure \ref{figmountainfold} since minimizing curves should bend as the scale $\varepsilon$ in principle.

However, this is a kind of local necessary condition and in general the global shape formation of curves is very complicated.
In order to find overhanging minimizers rigorously, we deal with a special substrate (fakir carpet), which is simple enough to analyze.

\begin{figure}[tb]
	\begin{center}
		\includegraphics[width=50mm]{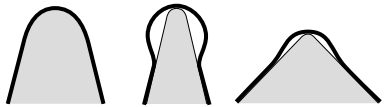}
		\caption{Curves near mountain folds.
			Minimizing curves should bend at the scale $\varepsilon$ and hence overhang to circumvent more sharp folds (center).
			Curves would not overhang for folds with large bending scale (left) or small slope (right).}
		\label{figmountainfold}
	\end{center}
\end{figure}

In what follows, we first give a formal discussion for a very singular substrate, and then rigorously prove the existence of overhanging minimizers for smooth or Lipschitz substrates.

\subsection{For fakir carpets: strategy}

In this subsection we give an intuitive explanation by formally taking a singular substrate as in Figure \ref{figfakircarpet}; $\psi$ is the fakir carpet of height $h$ and period $1$, which is the most simple substrate with a singularly sharp mountain fold (but no longer a continuous function).
For a fakir carpet substrate, we obtain a general lower bound for the energy of all non-overhanging curves and show that, under suitable assumptions on the smallness of $\varepsilon$ and $\alpha$, there is a special overhanging competitor so that its energy is lower than any non-overhanging curve.

\begin{figure}[tb]
	\begin{center}
		\includegraphics[width=60mm]{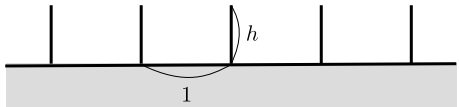}
		\caption{Fakir carpet of height $h$ and the $1$-periodicity.}
		\label{figfakircarpet}
	\end{center}
\end{figure}

We first obtain a lower bound for non-overhanging curves.
In the present setting, it turns out that, for any $\varepsilon$ and $\alpha$, any non-overhanging curve $\gamma\in\mathcal{A}$ satisfies
\begin{equation}\label{curveenergy1}
E[\gamma]\geq\min\{1,h\}.
\end{equation}
In fact, any non-overhanging curve $\gamma$ is either, not touching the base part of the fakir carpet as in Figure \ref{figfakircarpet1}, or touching as in Figure \ref{figfakircarpet2}.
Note that in both cases $\gamma$ touches at most one side of the needle as in the figures since $\gamma$ is not overhanging.
(To touch both sides, the curve must have a singularity.)
In the former case (Figure \ref{figfakircarpet1}), the curve $\gamma$ has the free part of length at least $1$, i.e., $E[\gamma]\geq1$.
In the latter case (Figure \ref{figfakircarpet2}), the curve $\gamma$ has the free part of length at least $h$, i.e., $E[\gamma]\geq h$.
Consequently, any non-overhanging curve satisfies (\ref{curveenergy1}).

\begin{figure}[tb]
	\begin{center}
		\begin{tabular}{cc}
			\begin{minipage}{0.45\hsize}
				\begin{center}
					\includegraphics[width=30mm]{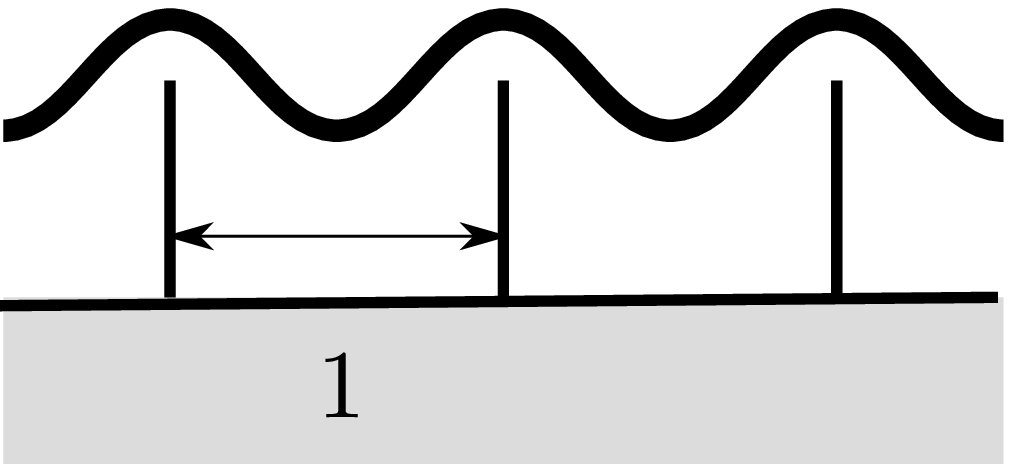}
				\end{center}
			\end{minipage}
			\begin{minipage}{0.45\hsize}
				\begin{center}
					\includegraphics[width=30mm]{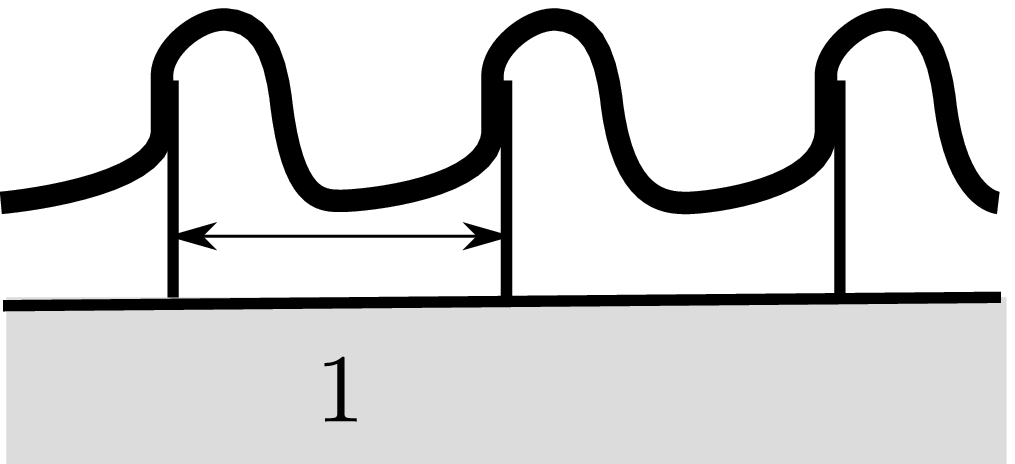}
				\end{center}
			\end{minipage}
		\end{tabular}
		\caption{Non overhanging curves not touching the base part.}
		\label{figfakircarpet1}
	\end{center}
	\begin{center}
		\begin{tabular}{cc}
			\begin{minipage}{0.45\hsize}
				\begin{center}
					\includegraphics[width=30mm]{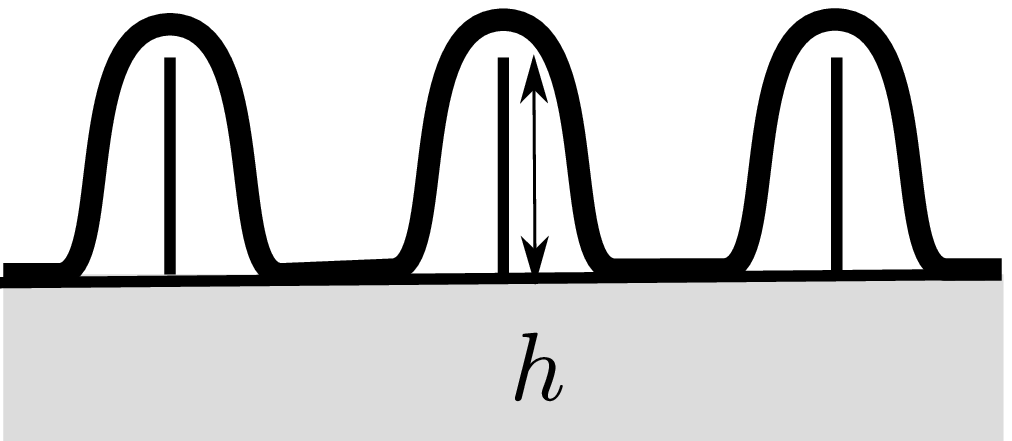}
				\end{center}
			\end{minipage}
			\begin{minipage}{0.45\hsize}
				\begin{center}
					\includegraphics[width=30mm]{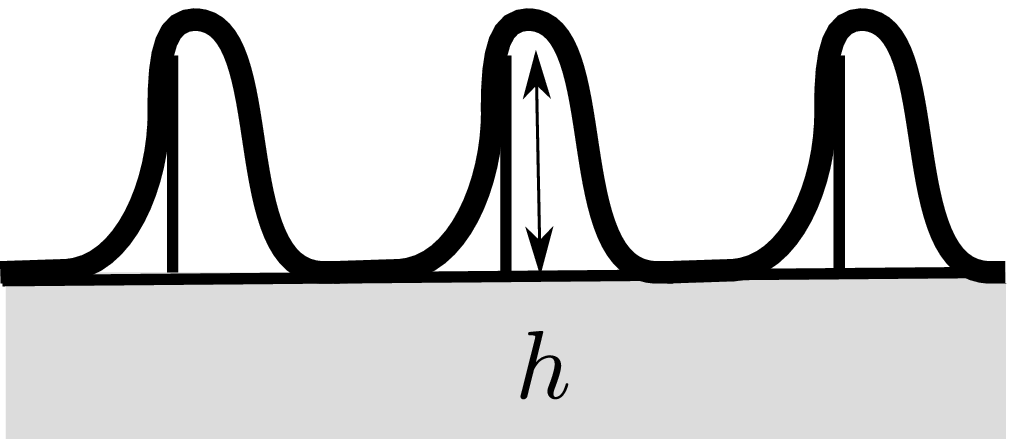}
				\end{center}
			\end{minipage}
		\end{tabular}
		\caption{Non overhanging curves touching the base part.}
		\label{figfakircarpet2}
	\end{center}
\end{figure}

On the other hand, providing that $\varepsilon$ is sufficiently small as $\varepsilon<\min\{1,h\}/5$, we can define an overhanging competitor $\hat{\gamma}\in\mathcal{A}$ as in Figure \ref{figfakircarpetoverhang}, which is almost adhering to the fakir carpet and bending in the free (non-adhering) part as circular arcs of radius $\varepsilon$.
Then $\hat{\gamma}$ satisfies
\begin{equation}\label{curveenergy2}
E[\hat{\gamma}]< (1+2h)\alpha + 20\pi\varepsilon.
\end{equation}
In fact, the total length of the bounded (adhering) part $\hat{\gamma}_B$ is less than $1+2h$, that is, $E[\hat{\gamma}_B]<(1+2h)\alpha$, and in the free part $\hat{\gamma}_F$ the energy $E[\hat{\gamma}_F]$ is bounded as
\begin{eqnarray}
\int_{\hat{\gamma}_F} \left[\varepsilon^2\kappa^2+1\right]ds < 10\pi\varepsilon\left[\varepsilon^2\frac{1}{\varepsilon^2}+1 \right] = 20\pi\varepsilon, \nonumber
\end{eqnarray}
where $10\pi=5\times2\pi$ is a rough upper bound for the total angle of the circular arcs.

\begin{figure}[tb]
	\begin{center}
		\includegraphics[width=50mm]{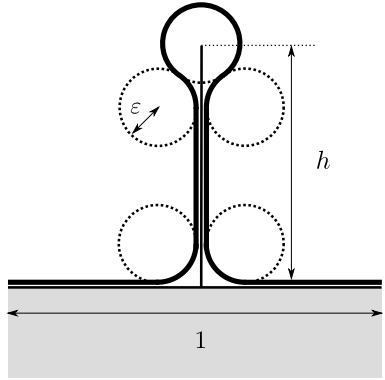}
		\caption{Overhanging competitor above the fakir carpet.
			The curve consists of the adhering straight parts and the non-adhering circular arc parts of radius $\varepsilon$.}
		\label{figfakircarpetoverhang}
	\end{center}
\end{figure}

Combining (\ref{curveenergy1}) and (\ref{curveenergy2}), we see that the conditions
\begin{eqnarray}
\alpha<\Delta:=\frac{\min\{1,h\}}{1+2h}, \quad \varepsilon<\frac{(1+2h)(\Delta-\alpha)}{20\pi} \nonumber
\end{eqnarray}
imply $E[\gamma]>E[\hat{\gamma}]$, which means that the energy of any non-overhanging curve $\gamma$ is strictly higher than the overhanging competitor $\hat{\gamma}$.
In conclusion, for any fakir carpet $\psi$, if $\alpha$ and $\varepsilon$ are sufficiently small as $\alpha\ll\Delta$ and $\varepsilon\ll\min\{1,h\}$, then any minimizer must overhang.

Finally, we remark that for any $h>0$ the inequality
\begin{eqnarray}
\Delta=\frac{\min\{1,h\}}{1+2h}\leq\frac{1}{3} \nonumber
\end{eqnarray}
holds, and the equality is attained if and only if $h=1$.
This means that, at least in our method, the case of height $1$ allows the optimal (highest) upper bound for $\alpha$ or $\varepsilon$ to observe overhangs.
The dimensionless quantity $\Delta$ may be read as a ``deviation'' of the hall of a fakir carpet.
Indeed, the hall is the square when $\Delta$ takes the maximum $1/3$ ($h=1$), and the halls become thin rectangles as $\Delta\downarrow0$ ($h\to0$ or $h\to\infty$).
Thus, the more a hall deviates from the square, the smaller the $\varepsilon$ and $\alpha$ are necessary to be for the presence of overhangs.

\subsection{For smooth substrates}

A similar consideration is valid for a smooth but fakir carpet like substrate $\psi$ as in Figure \ref{figsmoothcarpet}.
The main difference from the singular case is that, in the smooth case, curves may touch both the walls of substrates.
Thus we need to state that if a non-overhanging curve touches both the wall parts of a ``thin'' needle then the total energy is sufficiently high.
To this end, we prepare a general lemma concerning a lower bound for the bending energies of non-overhanging curves as in Figure \ref{figgraphcurve}.
The lower bound only depends on the width of curves in the $x$-direction and the tangential angles at the endpoints.

We define the following nonnegative even function:
\begin{eqnarray}
	f(\theta):=\left|\int_0^{\theta}\sqrt{\cos\varphi} d\varphi\right| = \int_0^{|\theta|}\sqrt{\cos\varphi} d\varphi. \nonumber
\end{eqnarray}
Moreover, for a regular curve $\gamma$, we define the tangential angle $\theta(t)\in(-\pi,\pi]$ at $t\in\bar{I}$ so that $\dot{\gamma}(t)=|\dot{\gamma}(t)|(\cos\theta(t),\sin\theta(t))$.
Then we have the following

\begin{figure}[tb]
	\begin{center}
		\includegraphics[width=40mm]{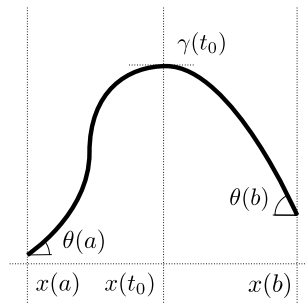}
		\caption{Curve $\gamma=(x,y)$ under the assumption of Lemma \ref{lemma1}.}
		\label{figgraphcurve}
	\end{center}
\end{figure}

\begin{lemma}\label{lemma1}
	Let $J=(a,b)$ be a bounded interval and $\gamma=(x,y)\in H^2(J;\mathbb{R}^2)$ be a regular curve such that $x'(t)\geq0$ for any $t\in J$ and $y'(t_0)=0$ for some $t_0\in J$.
	Then the following inequality holds:
	\begin{eqnarray}
		\int_{\gamma}\kappa^2ds \geq \frac{[f(\theta(a))+f(\theta(b))]^2}{x(b)-x(a)}. \nonumber
	\end{eqnarray}
	%In particular, in the case that $|\theta(a)|=|\theta(b)|=\pi/2$, the lower bound is larger than $\frac{4}{x(b)-x(a)}$.
\end{lemma}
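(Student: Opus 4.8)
The plan is to reduce everything to the tangential angle $\theta$, split the curve at the horizontal point $t_0$, and apply a weighted Cauchy--Schwarz inequality on each half whose weight is tuned precisely to reproduce the function $f$.

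First I would record the basic facts about $\theta$. Since $\gamma$ is a regular $H^2$ curve, the unit tangent $\dot\gamma/|\dot\gamma|$ lies in $H^1$, and the hypothesis $x'\geq0$ forces $\cos\theta\geq0$, so the tangent stays in the closed right half-circle; hence the continuous lift $\theta$ takes values in $[-\pi/2,\pi/2]$ and agrees with the principal branch fixed in the statement. In particular $\theta\in H^1$ with $\theta_s=\kappa$ in the arc-length parameter. At $t_0$, the condition $y'(t_0)=0$ together with $\cos\theta(t_0)\geq0$ and regularity $|\dot\gamma(t_0)|>0$ forces $\theta(t_0)=0$.

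Next, parametrizing by arc length and letting $s_0$ be the arc-length value at $t_0$, I would split $\int_\gamma\kappa^2\,ds$ over $[0,s_0]$ and $[s_0,L]$. On the right piece, weighted Cauchy--Schwarz with weight $\sqrt{\cos\theta}$ gives
\[
\int_{s_0}^{L}\kappa^2\,ds \;\geq\; \frac{\left(\int_{s_0}^{L}|\kappa|\sqrt{\cos\theta}\,ds\right)^2}{\int_{s_0}^{L}\cos\theta\,ds}.
\]
The denominator equals $x(b)-x(t_0)$, and for the numerator I use $|\kappa|=|\theta_s|$, drop the absolute value, and change variables $s\mapsto\theta$ (legitimate since $\theta$ is absolutely continuous and $\varphi\mapsto\sqrt{\cos\varphi}$ is continuous) to obtain $\int_0^{\theta(b)}\sqrt{\cos\varphi}\,d\varphi=f(\theta(b))$, using $\theta(s_0)=0$. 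Thus $\int_{s_0}^L\kappa^2\,ds\geq f(\theta(b))^2/(x(b)-x(t_0))$, and symmetrically $\int_0^{s_0}\kappa^2\,ds\geq f(\theta(a))^2/(x(t_0)-x(a))$.

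Finally I would combine the two halves. Writing $p=x(t_0)-x(a)$ and $q=x(b)-x(t_0)$, both are strictly positive (if $p=0$ then $\cos\theta\equiv0$ on $[a,t_0]$ by nonnegativity and continuity, contradicting $\theta(t_0)=0$ since $t_0$ is interior), so the elementary Engel-form inequality $A^2/p+B^2/q\geq(A+B)^2/(p+q)$ yields the claim with $A=f(\theta(a))$, $B=f(\theta(b))$ and $p+q=x(b)-x(a)$. The one genuinely clever step, and the point I would be most careful about, is the choice of weight $\sqrt{\cos\theta}$: it is reverse-engineered so that the factor $\cos\theta\,ds$ integrates to the horizontal extent while $|\kappa|\sqrt{\cos\theta}\,ds$ integrates to $f$. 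The remaining care is purely bookkeeping, namely confirming $\theta\in H^1$ with $\theta\in[-\pi/2,\pi/2]$ and that the substitution is valid.
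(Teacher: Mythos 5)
Your proof is correct, and while the key mechanism is the same Cauchy--Schwarz idea as in the paper, the execution differs in ways worth recording. The paper first assumes $x'>0$ on $\overline{J}$, represents $\gamma$ as a graph $u\in H^2(x(a),x(b))$, applies a single unweighted Cauchy--Schwarz in the horizontal variable to $\int |u''|^2(1+u'^2)^{-5/2}\,dz$, and then substitutes $w=u'$, using $u'(x(t_0))=0$ so that the variation of $\int(1+w^2)^{-5/4}\,dw$ along the path of $u'$ dominates the two monotone pieces from $\tan\theta(a)$ to $0$ and from $0$ to $\tan\theta(b)$, which evaluate exactly to $f(\theta(a))+f(\theta(b))$; the degenerate case $x'\geq0$ is then recovered by perturbing to $\gamma_\delta(t)=(x(t)+\delta t,\,y(t))$ and letting $\delta\to0$. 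Your weighted Cauchy--Schwarz in arc length is the same inequality in different coordinates --- indeed $|\kappa|\sqrt{\cos\theta}\,ds=|u''|(1+u'^2)^{-5/4}\,dz$ and $\cos\theta\,ds=dz$ --- but your intrinsic formulation never needs the graph reduction: it tolerates $x'=0$ and even vertical tangents $\theta=\pm\pi/2$ directly (where $\tan\theta$ is infinite and the paper's substitution would become improper, which is precisely why the paper needs the $\delta$-approximation), so the limiting step disappears entirely. The structural difference is that the paper extracts both boundary contributions from one global Cauchy--Schwarz, whereas you apply it separately on $[a,t_0]$ and $[t_0,b]$ and recombine with the Engel-form inequality $A^2/p+B^2/q\geq(A+B)^2/(p+q)$; this forces you to verify $p=x(t_0)-x(a)>0$ and $q=x(b)-x(t_0)>0$, which your interiority argument for $t_0$ handles correctly, and it requires the bookkeeping you rightly flag (the $H^1$ lift $\theta\in[-\pi/2,\pi/2]$ with $\theta_s=\kappa$, the identification $\theta(t_0)=0$ from regularity, and the validity of the substitution via absolute continuity of $F\circ\theta$ for the Lipschitz primitive $F$ of $\sqrt{\cos}$). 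Both routes produce exactly the same constant; yours buys a cleaner, approximation-free treatment of the degenerate and vertical cases, while the paper's stays closer to elementary graph calculus at the cost of a two-step argument.
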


\begin{proof}
	We first assume that a curve $\gamma=(x,y)$ satisfies $x'(t)>0$ for any $t\in\overline{J}$.
	Then the curve is represented by some function $u\in H^2(x(a),x(b))$ such that $u'(x(a))=\tan\theta(a)$, $u'(x(b))=\tan\theta(b)$, $u'(x(t_0))=0$ and
	\begin{eqnarray}
		\int_{\gamma}\kappa^2ds=\int_{x(a)}^{x(b)}\frac{|u''(z)|^2}{(1+u'(z)^2)^{5/2}}dz. \nonumber
	\end{eqnarray}
	The Cauchy-Schwarz inequality yields that
	\begin{eqnarray}
		&&\int_{x(a)}^{x(b)}\frac{|u''(z)|^2}{(1+u'(z)^2)^{5/2}}dz \nonumber\\
		&\geq& \frac{1}{x(b)-x(a)}\left(\int_{x(a)}^{x(b)}\frac{|u''(z)|}{(1+u'(z)^2)^{5/4}}dz\right)^2. \nonumber
	\end{eqnarray}
	By change of variables, we have
	\begin{eqnarray}
		&&\int_{x(a)}^{x(b)}\frac{|u''(z)|}{(1+u'(z)^2)^{5/4}}dz \nonumber\\
		&\geq& \left|\int_{\tan\theta(a)}^{0}\frac{dw}{(1+w^2)^{5/4}}\right| + \left|\int_{0}^{\tan\theta(b)}\frac{dw}{(1+w^2)^{5/4}}\right| \nonumber\\
		&=& f(\theta(a))+f(\theta(b)), \nonumber
	\end{eqnarray}
	and thus we obtain the desired lower bound.
	
	For general $\gamma=(x,y)$ with $x'\geq0$, we obtain the same conclusion by considering modified curves as $\gamma_\delta(t)=(x(t)+\delta t,y(t))$ for small $\delta>0$ and taking the limit $\delta\to0$.
	Note that $x(b)>x(a)$ holds even in this case by the assumption of $\gamma$.
\end{proof}

We now state and prove the main theorem.
Let $h>0$ and $0<2\delta<\min\{1,h\}$.
A $1$-periodic function $\phi$ is called $\delta$-smooth fakir carpet of height $h$ if $\phi$ is as in Figure \ref{figsmoothcarpet}, namely, of class $C^\infty$ and satisfies
\begin{enumerate}
	\item $\phi(x)=\phi(1-x)$ for any $x\in[0,1/2]$,
	\item $\phi\equiv0$ in $[0,1/2-\delta]$ and $\phi(1/2)=h$,
	\item $\phi'\geq0$ in $[0,1/2]$, 
	\item $\phi''(x)=0$ while $\delta\leq\phi(x)\leq h-\delta$.
\end{enumerate}
Moreover, we define its base and wall parts as in Figure \ref{figsmoothparts}; the base part is the part with $y=\phi(x)=0$ and the left (resp.\ right) wall part is the part with $y=\phi(x)$, $\delta\leq y\leq h-\delta$ and $\phi'(x)>0$ (resp.\ $\phi'(x)<0$).
All the parts are straight.
Note that $\delta\gtrsim \|\phi''\|_\infty^{-1}$.

\begin{theorem}\label{thmoverhanging}
	Let $h>0$ and $\Delta:=\frac{\min\{1,h\}}{1+2h}$.
	Then for any $\alpha<\Delta$ and $\varepsilon<\frac{(1+2h)(\Delta-\alpha)}{20\pi}$ there exists $0<\bar{\delta}<\varepsilon$ such that, for any $\delta$-smooth fakir carpet substrate $\psi_\delta$ of height $h$ with $0<\delta<\bar{\delta}$, any minimizer of (\ref{minimizingproblem}) is overhanging.
\end{theorem}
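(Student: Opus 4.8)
The plan is to combine an upper energy bound for a constructed overhanging competitor with a lower energy bound for \emph{every} non-overhanging admissible curve, mirroring the formal singular computation \eqref{curveenergy1}--\eqref{curveenergy2} and showing that the smoothing parameter $\delta$ only contributes controllable corrections. The hypotheses $\alpha<\Delta$ and $\varepsilon<\frac{(1+2h)(\Delta-\alpha)}{20\pi}$ are designed precisely so that $(1+2h)\alpha+20\pi\varepsilon<(1+2h)\Delta=\min\{1,h\}$. Writing $g:=\min\{1,h\}-(1+2h)\alpha-20\pi\varepsilon>0$ for this gap, the whole argument reduces to proving that, for $\delta$ small, every non-overhanging competitor has energy at least $\min\{1,h\}-C\delta$ while the competitor has energy at most $(1+2h)\alpha+20\pi\varepsilon+O(\delta)$, and then choosing $\bar\delta$ so small that all $\delta$-corrections are swallowed by $g$.

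First I would build the overhanging competitor $\hat\gamma\in\mathcal{A}$ for $\psi_\delta$ by adapting Figure \ref{figfakircarpetoverhang}: let it adhere to the base and to both walls and insert circular arcs of radius $\varepsilon$ in the free part to perform the turns needed to climb over and around the needle. Since $\bar\delta<\varepsilon$, these arcs fit inside $\overline{\Omega}$ without meeting the substrate. A direct check shows the total bounded length is $1+2h+O(\delta)$ (the two slanted walls have length $2\sqrt{h^2+\delta^2}\le 2(h+\delta)$ while the base shrinks by $2\delta$), whence the bounded contribution is $\le(1+2h)\alpha+O(\delta)$; and the free contribution is $<20\pi\varepsilon$ exactly as in the estimate preceding \eqref{curveenergy2}, the total turning being bounded by $10\pi$. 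Hence $E[\hat\gamma]<(1+2h)\alpha+20\pi\varepsilon+O(\delta)$. This step is essentially the routine construction already sketched.

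The heart of the proof is the lower bound for an arbitrary non-overhanging $\gamma=(x,y)\in\mathcal{A}$, where I would distinguish whether $\gamma$ meets one or both wall parts of the needle. If $\gamma$ touches \emph{both} walls, pick contact points $P_L$ on the left and $P_R$ on the right wall; since $\gamma$ stays in $\overline{\Omega}$ and the walls are straight with slope $m\ge(h-2\delta)/\delta$, the $C^1$ tangency at an interior contact forces the tangential angle there to equal the wall angle, so $\theta(t_L)=\arctan m$ and $\theta(t_R)=-\arctan m$. Because $x=1/2$ is traversed for some $t_*\in(t_L,t_R)$ with $y(t_*)\ge h$ while $y\le h-\delta$ at the contacts, $y'$ vanishes at an interior point of $J=(t_L,t_R)$, and $x(t_R)-x(t_L)\le 2\delta$. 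Lemma \ref{lemma1} then gives
\[
\int_\gamma\kappa^2\,ds\ge\frac{[2f(\arctan m)]^2}{2\delta}=\frac{2\,f(\arctan m)^2}{\delta},
\]
and since $f(\arctan m)\to f(\pi/2)>0$ as $\delta\to0$, the bending energy $\varepsilon^2\int_\gamma\kappa^2\,ds$ blows up like $\varepsilon^2/\delta$; in particular it exceeds $\min\{1,h\}$ once $\delta$ is small. This both-walls case is exactly what Lemma \ref{lemma1} was prepared for, and I expect it to be the main obstacle, both in pinning down the contact angles rigorously (interior tangency, degenerate or boundary contacts) and in making the passage $m\to\infty$ quantitative.

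If instead $\gamma$ touches at most one wall, I would recover the singular dichotomy \eqref{curveenergy1} up to an $O(\delta)$ loss. Either $\gamma$ never touches the base, and then over the flat region $\{x\in[0,1/2-\delta]\cup[1/2+\delta,1]\}$ it is strictly above the substrate, hence free, contributing free length $\ge 1-2\delta$; or $\gamma$ touches the base, and then on the side whose wall it avoids, the descent from the peak ($y\ge h$) down to the base is free for all heights $y\in[\delta,h-\delta]$, contributing free length $\ge h-2\delta$. In either event $E[\gamma]\ge\min\{1,h\}-2\delta$. Finally, choosing $\bar\delta<\varepsilon$ so small that the $\delta$-corrections of the previous two steps are all strictly less than $g$ and that the both-walls bound $2\varepsilon^2 f(\arctan m)^2/\delta$ exceeds $\min\{1,h\}$ for every $\delta<\bar\delta$, we obtain for every non-overhanging $\gamma$ that $E[\gamma]>(1+2h)\alpha+20\pi\varepsilon>E[\hat\gamma]$. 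Hence no non-overhanging curve can be a minimizer, i.e.\ every minimizer of \eqref{minimizingproblem} overhangs. The only genuinely delicate points are the contact geometry in the both-walls case and the uniform smallness of the $\delta$-corrections, both of which are controlled by the fixed gap $g$.
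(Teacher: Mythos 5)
Your proposal is correct and follows essentially the same route as the paper's proof: the same overhanging competitor with energy below $(1+2h)\alpha+20\pi\varepsilon\le\min\{1,h\}-c$, the same trichotomy (no base contact giving free length $\ge 1-2\delta$; base contact with a wall avoided giving free length $\ge h-2\delta$; both walls touched, handled by Lemma \ref{lemma1} with the blow-up $2\varepsilon^2 f(\theta_\delta)^2/\delta$ as $\theta_\delta\to\pi/2$), and the same choice of $\bar\delta$ via the fixed gap. Your extra care --- verifying $y'(t_0)=0$ via the crossing of $x=1/2$ at height $\ge h$, noting the tangency of contacts with the straight walls, and phrasing the second case as ``touches the base but avoids at least one wall'' --- only makes explicit details the paper leaves implicit.
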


\begin{figure}[tb]
	\begin{center}
		\includegraphics[width=60mm]{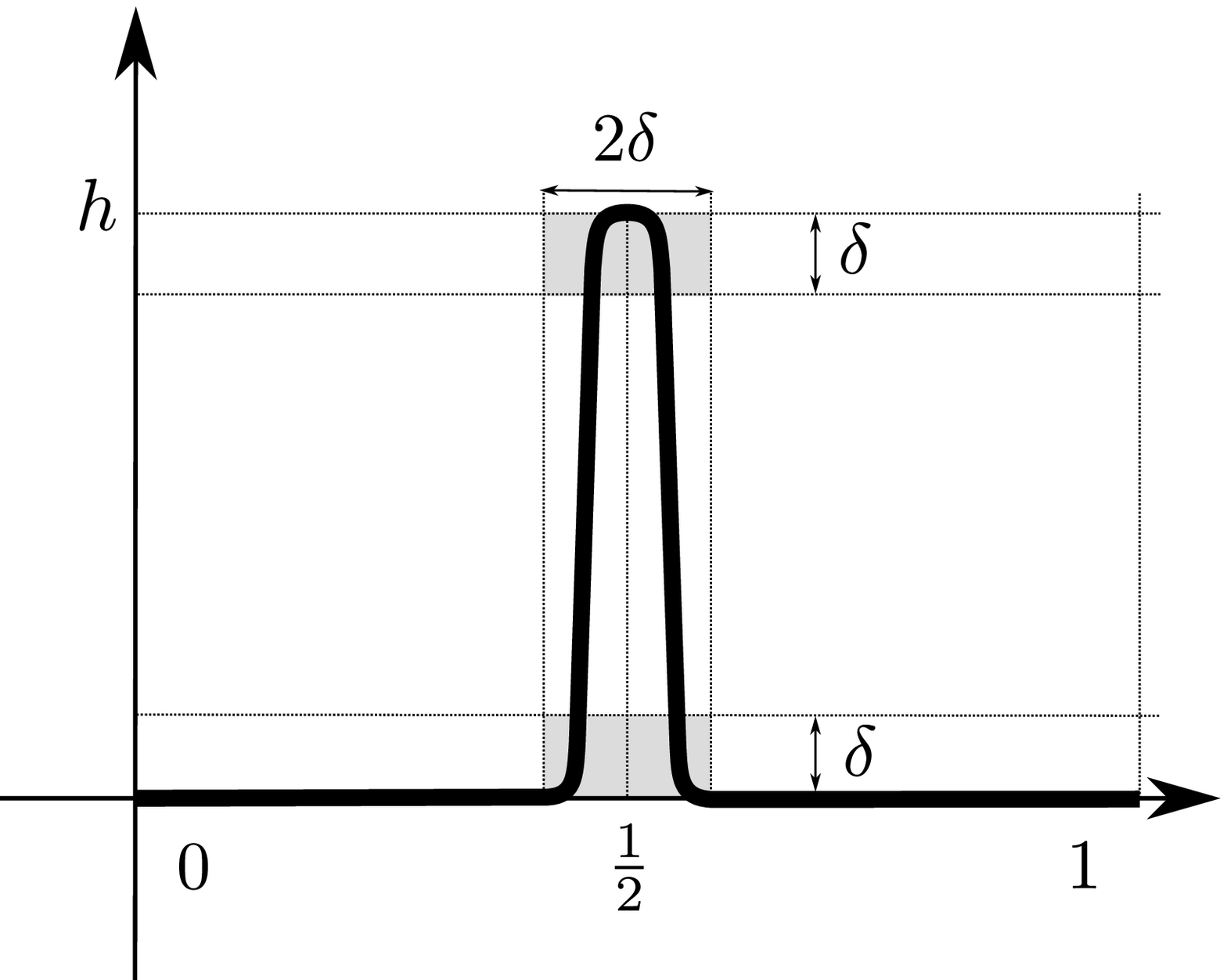}
		\caption{
			A $1$-period part of the smooth but fakir carpet like substrate.
			The function is bending only in the gray regions and otherwise straight.}
		\label{figsmoothcarpet}
	\end{center}
	\begin{center}
		\includegraphics[width=60mm]{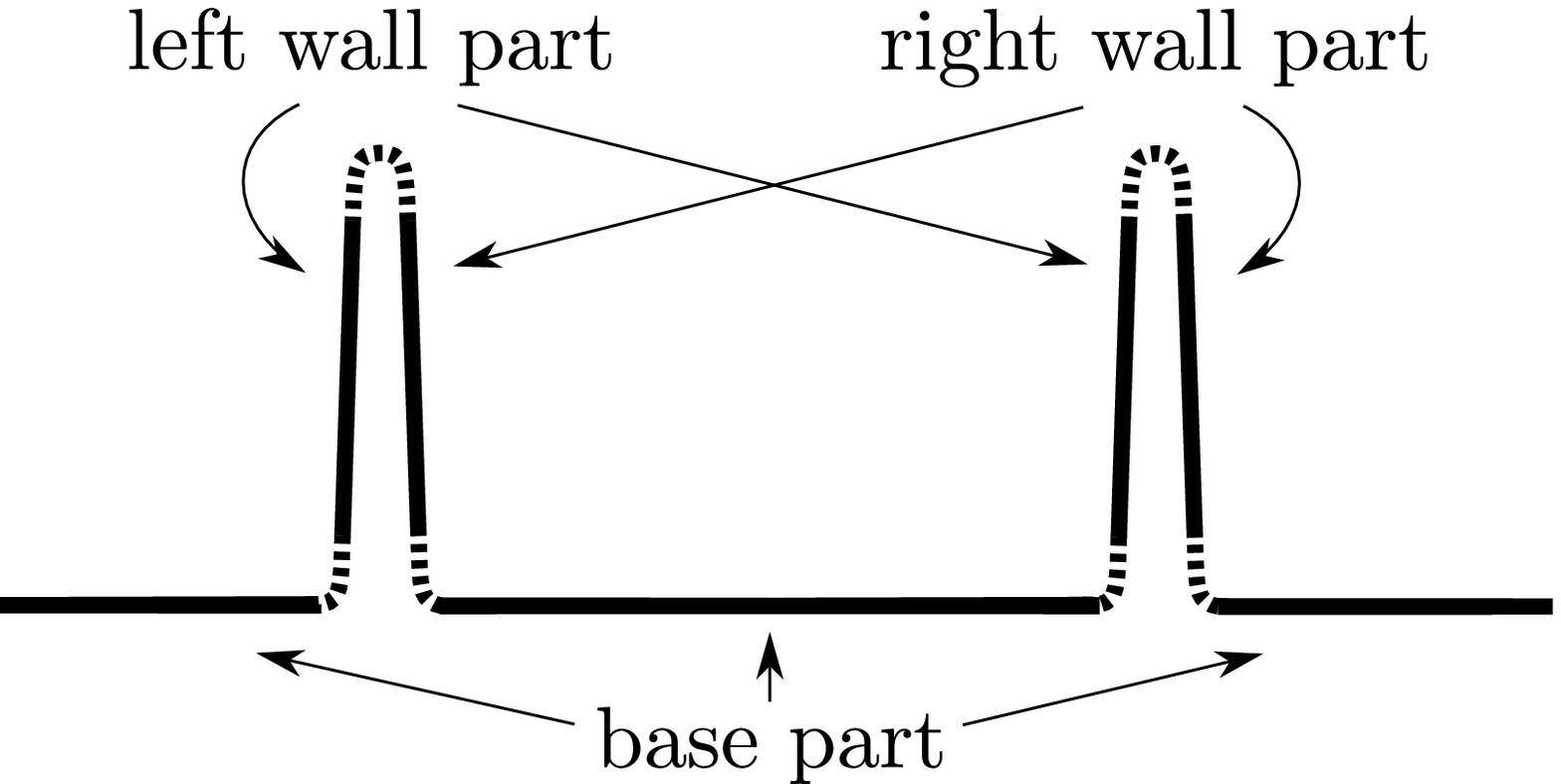}
		\caption{Base part and left and right wall parts of $\delta$-smooth fakir carpets.}
		\label{figsmoothparts}
	\end{center}
\end{figure}

\begin{figure}[tb]
	\begin{center}
		\includegraphics[width=50mm]{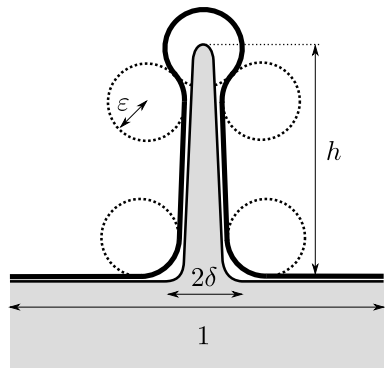}
		\caption{
			Overhanging competitor $\hat{\gamma}_\delta$ above the smooth substrate $\psi_\delta$ defined for any small $\delta\ll\varepsilon$.
			The curve $\hat{\gamma}_\delta$ is adhering to the substrate only in the base and wall parts and otherwise bending as circular arc of radius $\varepsilon$.
			For any small $\delta$ the curve $\hat{\gamma}_\delta$ is overhanging.}
		\label{figsmoothoverhang}
	\end{center}
\end{figure}

\begin{figure}[tb]
	\begin{center}
		\includegraphics[width=60mm]{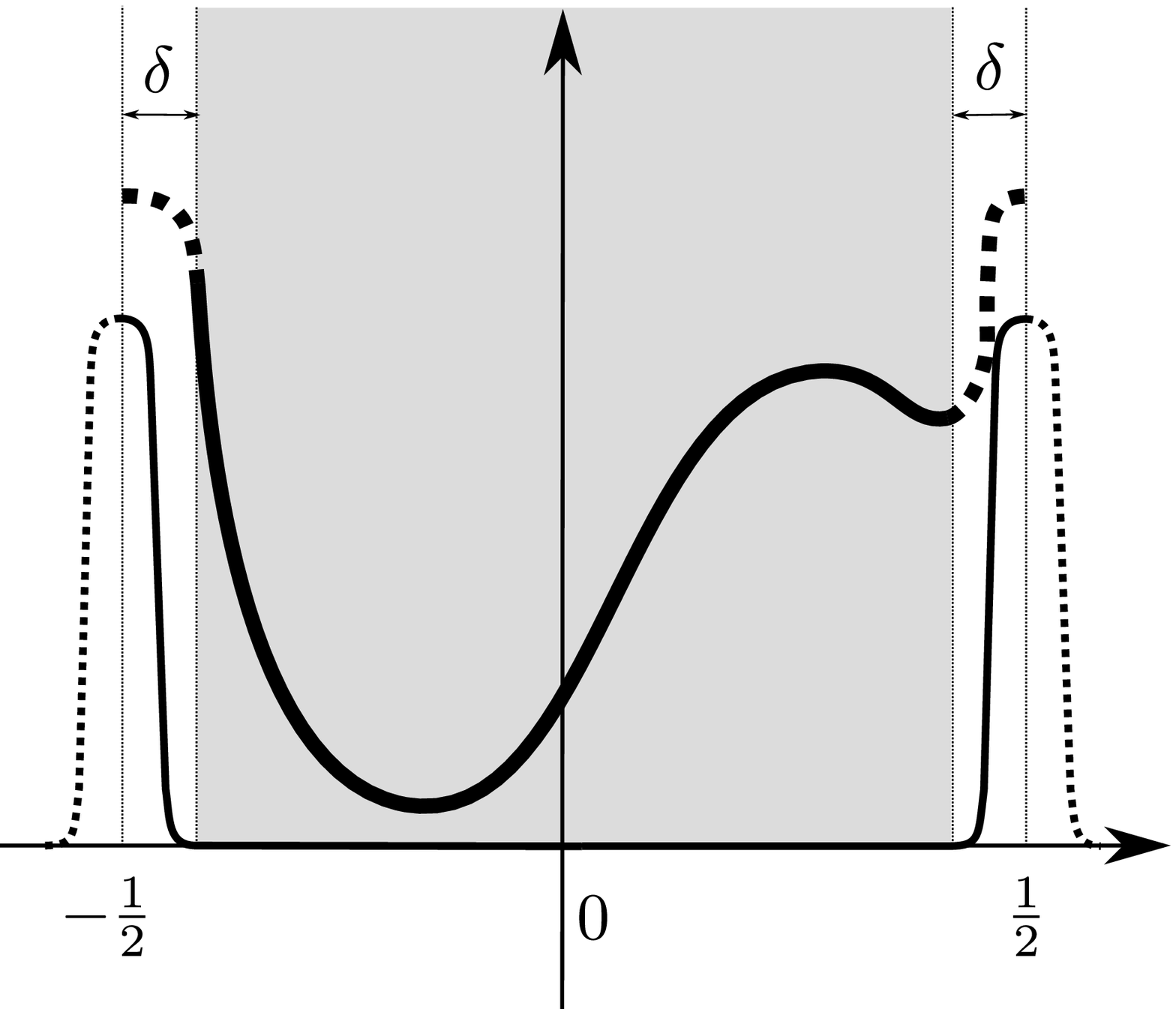}
		\caption{A curve not touching the base part in $(-\frac{1}{2},\frac{1}{2})$.
			To avoid the base part, any periodic curve $\gamma\in\mathcal{A}$ must cross the gray region without touching the graph of $\psi_\delta$.}
		\label{figsmoothproof1}
	\end{center}
	\begin{center}
		\includegraphics[width=60mm]{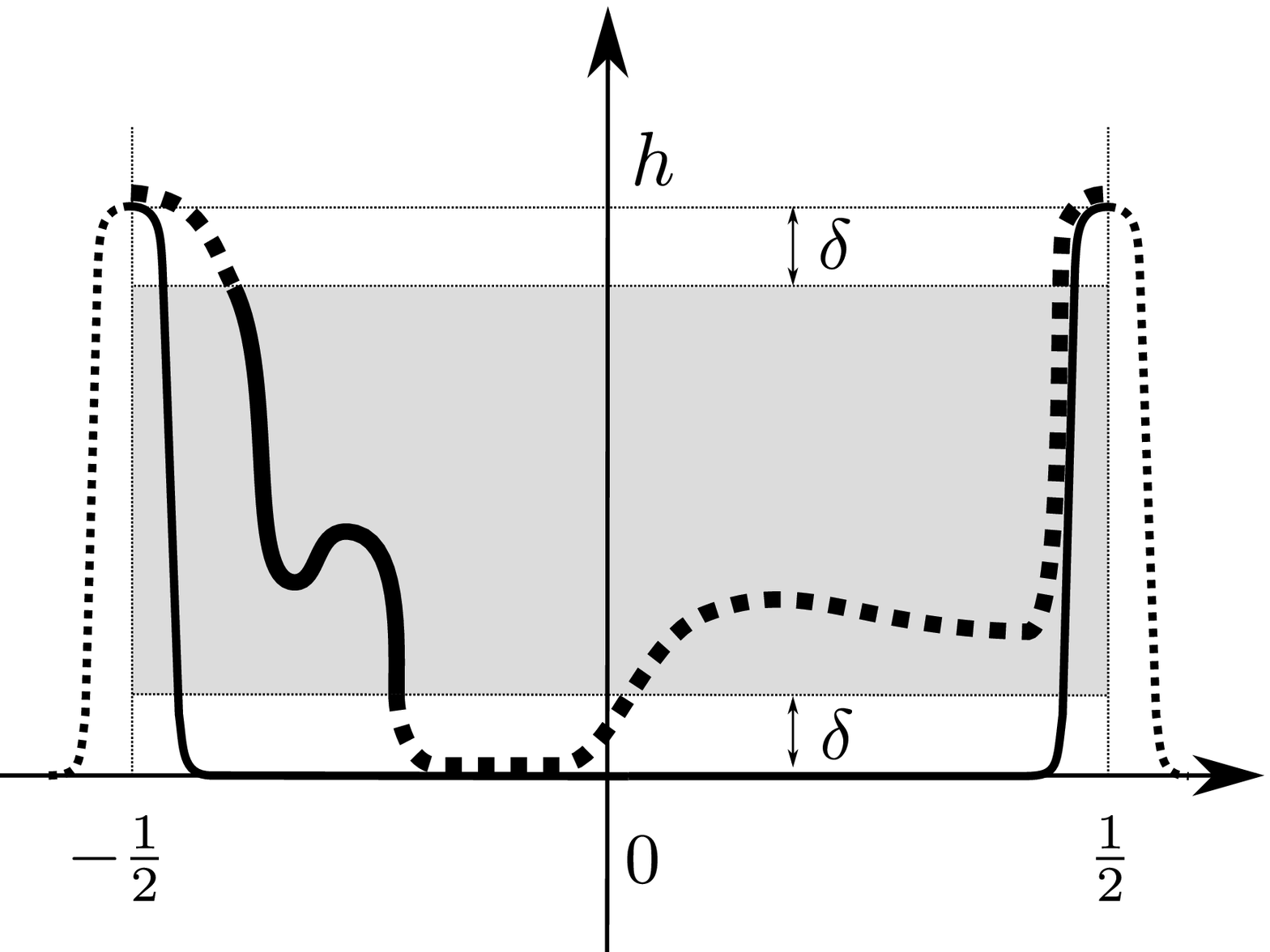}
		\caption{A curve touching the base part but avoiding the right wall part in $(-\frac{1}{2},\frac{1}{2})$.
			To touch the base part and avoid the right (or left) wall part, any periodic non-overhanging curve $\gamma\in\mathcal{A}$ must cross the gray region without touching the graph of $\psi_\delta$ at least one time.}
		\label{figsmoothproof2}
	\end{center}
\end{figure}

\begin{proof}
	Fix any $\alpha<\Delta$ and $\varepsilon<\frac{(1+2h)(\Delta-\alpha)}{20\pi}$.
	For any small $0<\delta\ll\varepsilon$, we take a substrate $\psi_\delta$ of $\delta$-smooth fakir carpet of height $h$.
	Take the overhanging competitor $\hat{\gamma}_\delta\in\mathcal{A}$ as in Figure \ref{figsmoothoverhang}.
	Then, by the similar way to obtain (\ref{curveenergy2}), we see that
	\begin{eqnarray}
		E[\hat{\gamma}_\delta]\leq (1+2h)\alpha+20\pi\varepsilon.\nonumber
	\end{eqnarray}
	By the assumptions on $\alpha$ and $\varepsilon$, we have
	\begin{eqnarray}
		E[\hat{\gamma}_\delta]\leq\min\{1,h\}-c, \nonumber
	\end{eqnarray}
	where $c>0$ is some constant independent of $\delta$.
	Therefore it suffices to prove that
	\begin{eqnarray}\label{infnonoverhang}
		\lim_{\delta\downarrow0}\inf_\gamma E[\gamma]\geq\min\{1,h\},
	\end{eqnarray}
	where the infimum is taken over all non-overhanging curves in the upper side of $\psi_\delta$.
	Indeed, if this is proved then there exists $0<\bar{\delta}<\varepsilon$ such that for any $0<\delta<\bar{\delta}$ any non-overhanging curve $\gamma$ satisfies $E[\hat{\gamma}_\delta]<E[\gamma]$.
	
	Notice that any (non-overhanging) curve $\gamma\in\mathcal{A}$ belongs to at least one of the following three cases:
	\begin{enumerate}
		\item $\gamma$ does not touch the base part (Figure \ref{figsmoothproof1}),
		\item $\gamma$ touches the base part but not the left nor right wall part (Figure \ref{figsmoothproof2}),
		\item $\gamma$ touches both the left and right wall parts.
	\end{enumerate}
	We prove (\ref{infnonoverhang}) for all the cases 1, 2, and 3.
	
	{\it Case 1}.
	By the periodicity, as in Figure \ref{figsmoothproof1}, any curve $\gamma=(x,y)\in\mathcal{A}$ may be regarded as satisfying $x(0)=-1/2$ and $x(1)=1/2$.
	Then the condition of Case 1 implies that $\gamma$ passes through the region $\{|x|<1/2-\delta\}$ freely.
	Hence we have $E[\gamma] \geq 1-2\delta$, which implies (\ref{infnonoverhang}).
	
	{\it Case 2}.
	Similarly, as in Figure \ref{figsmoothproof2}, we may regard any curve $\gamma=(x,y)\in\mathcal{A}$ as satisfying $x(0)=-1/2$ and $x(1)=1/2$, and hence $y(0)=y(1)\geq h$.
	Then the condition of Case 2 and the fact that $\gamma$ is non-overhanging imply that $\gamma$ passes through the region $\{\delta<y<h-\delta\}$ freely at least one time.
	Hence we have $E[\gamma]\geq h-2\delta$,
	which implies (\ref{infnonoverhang}).
	
	{\it Case 3}.
	For any non-overhanging $\gamma\in\mathcal{A}$ touching both the wall parts (tangentially), there are $t_1,t_2\in I$ such that the part of $\gamma$ from $t_1$ to $t_2$ satisfies the assumption of Lemma \ref{lemma1} with $x(t_2)-x(t_1)\leq2\delta$ and $|\theta(t_1)|=|\theta(t_2)|=\theta_\delta$, where $\theta_\delta>0$ is the slope angle of the left wall part.
	Then Lemma \ref{lemma1} implies that
	\begin{eqnarray}
		E[\gamma]\geq\varepsilon^2\int_{\gamma}\kappa^2ds\geq \varepsilon^2\frac{4f(\theta_\delta)^2}{x(t_2)-x(t_1)}\geq \frac{2\varepsilon^2 f(\theta_\delta)^2}{\delta} \nonumber
	\end{eqnarray}
	and especially (\ref{infnonoverhang}).
	The proof is now complete.
\end{proof}
 
Theorem \ref{thmoverhanging} indicates that the smallness of the height of $\psi$ does not imply the graph representations of minimizers.
In this view, we can simplify the statement as

\begin{corollary}
	For any $h>0$, there exist $\varepsilon$, $\alpha$, and smooth $\psi$ of height $h$ such that any minimizer of (\ref{minimizingproblem}) must overhang.
\end{corollary}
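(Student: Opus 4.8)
The plan is to obtain the corollary as an immediate specialization of Theorem~\ref{thmoverhanging}, since all of the analytic content---the lower bound for non-overhanging curves (via the three-case classification and Lemma~\ref{lemma1}) and the overhanging competitor---is already contained there. The only remaining task is to exhibit, for the given $h$, a single admissible triple $(\varepsilon,\alpha,\psi)$ meeting the hypotheses of that theorem, which then forces every minimizer to overhang.

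First I would fix $h>0$ and set $\Delta:=\frac{\min\{1,h\}}{1+2h}$, a strictly positive number depending only on $h$, recalling from the preceding discussion that $\Delta\le 1/3$ for every $h>0$. I would then choose $\alpha:=\Delta/2$; this lies in $(0,\Delta)$ and, since $\Delta\le 1/3<1$, it automatically respects the standing constraint $0<\alpha<1$. With this choice $\Delta-\alpha=\Delta/2>0$, so the interval $\bigl(0,\frac{(1+2h)(\Delta-\alpha)}{20\pi}\bigr)$ is nonempty, and I would pick any $\varepsilon$ inside it. These are precisely the hypotheses ``$\alpha<\Delta$ and $\varepsilon<\frac{(1+2h)(\Delta-\alpha)}{20\pi}$'' of Theorem~\ref{thmoverhanging}.

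Applying Theorem~\ref{thmoverhanging} with this $\varepsilon$ and $\alpha$ yields a threshold $0<\bar\delta<\varepsilon$ such that every $\delta$-smooth fakir carpet of height $h$ with $0<\delta<\bar\delta$ forces all minimizers of~(\ref{minimizingproblem}) to overhang. I would then fix any $\delta$ with $0<\delta<\min\{\bar\delta,\min\{1,h\}/2\}$---so that $2\delta<\min\{1,h\}$ as required in the definition---and take $\psi$ to be a $\delta$-smooth fakir carpet of height $h$. By definition such a $\psi$ is of class $C^\infty$ and has height $h$, so the triple $(\varepsilon,\alpha,\psi)$ realizes exactly the conclusion asserted by the corollary. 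Since minimizers exist by Theorem~\ref{thmexistence}, the statement is non-vacuous.

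The only point that goes beyond unwinding quantifiers is that a function satisfying all four defining properties of a $\delta$-smooth fakir carpet genuinely exists as a $C^\infty$ function for each small $\delta$. This is a routine mollification: one starts from the singular needle profile and smooths the two corners inside $x$-windows of width comparable to $\delta$, while keeping $\phi''=0$ on the straight base and wall segments as required by property~(4). I expect this construction to be the only mildly technical ingredient, and it is entirely standard, so no real obstacle arises; the corollary is essentially a restatement of Theorem~\ref{thmoverhanging} with the free parameters instantiated.
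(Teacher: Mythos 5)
Your proposal is correct and matches the paper's own treatment: the paper states this corollary immediately after Theorem \ref{thmoverhanging} as a direct specialization, with no separate proof, which is exactly your instantiation of $\alpha\in(0,\Delta)$, $\varepsilon$ in the resulting nonempty interval, and a $\delta$-smooth fakir carpet with $\delta$ below the threshold $\bar\delta$. Your added remarks---checking $\alpha<1$ via $\Delta\le 1/3$, the constraint $2\delta<\min\{1,h\}$, the routine mollification giving existence of the smooth carpet, and non-vacuousness via Theorem \ref{thmexistence}---are correct and only make explicit what the paper leaves implicit.
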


In addition, as mentioned in the previous subsection, $h=1$ gives the optimal upper bound $1/3$ for $\alpha$ in our method.
In this view, Theorem \ref{thmoverhanging} is simplified as

\begin{corollary}
	For any $0<\alpha<1/3$, there exist $\varepsilon$ and smooth $\psi$ such that any minimizer of (\ref{minimizingproblem}) must overhang.
\end{corollary}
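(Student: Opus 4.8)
The plan is to obtain this corollary as an immediate specialization of Theorem \ref{thmoverhanging}, choosing the height $h$ that maximizes the deviation $\Delta$. As recorded in the discussion preceding the theorem, the map $h\mapsto\Delta=\frac{\min\{1,h\}}{1+2h}$ attains its maximal value $1/3$ precisely at $h=1$. This is the key structural observation: taking $h=1$ yields the largest threshold $\Delta=1/3$ below which overhangs are forced, so the hypothesis $\alpha<1/3$ in the corollary is exactly the hypothesis $\alpha<\Delta$ of the theorem in this optimal case.

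First I would fix an arbitrary $\alpha$ with $0<\alpha<1/3$ and set $h=1$, so that $\Delta=1/3$ and hence $\alpha<\Delta$. Next I would select $\varepsilon$ satisfying the smallness condition of Theorem \ref{thmoverhanging}, namely
\begin{eqnarray}
\varepsilon<\frac{(1+2h)(\Delta-\alpha)}{20\pi}=\frac{1-3\alpha}{20\pi}. \nonumber
\end{eqnarray}
Since $\alpha<1/3$, the right-hand side is strictly positive, so such an $\varepsilon$ exists.

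With $h=1$ and the chosen $\alpha,\varepsilon$, Theorem \ref{thmoverhanging} provides a threshold $0<\bar{\delta}<\varepsilon$ such that every $\delta$-smooth fakir carpet of height $1$ with $0<\delta<\bar{\delta}$ forces all minimizers of (\ref{minimizingproblem}) to overhang. I would then fix any such $\delta$ and set $\psi:=\psi_\delta$. Because a $\delta$-smooth fakir carpet is of class $C^\infty$ by definition, the resulting $\psi$ is smooth, and Theorem \ref{thmoverhanging} guarantees that any minimizer must overhang. This exhibits the required $\varepsilon$ and smooth $\psi$ and completes the argument.

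Since the corollary descends directly from Theorem \ref{thmoverhanging}, I do not anticipate a genuine obstacle; the only point demanding care is the optimization of $\Delta$ over $h$, which ensures that the cutoff $1/3$ in the statement is the sharp constant delivered by $h=1$ rather than an artifact of a suboptimal choice. Concretely, one should verify that $\Delta\le 1/3$ for all $h>0$, with equality exactly at $h=1$, so that $h=1$ is both admissible and optimal within this construction.
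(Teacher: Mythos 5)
Your proposal is correct and follows exactly the route the paper intends: the corollary is the specialization of Theorem \ref{thmoverhanging} at $h=1$, where $\Delta=\frac{\min\{1,h\}}{1+2h}$ attains its maximum value $1/3$, so that $\alpha<1/3$ becomes $\alpha<\Delta$ and any $\varepsilon<\frac{1-3\alpha}{20\pi}$ together with a $\delta$-smooth fakir carpet ($\delta<\bar{\delta}$), which is $C^\infty$ by definition, does the job. Your arithmetic and the verification that $\Delta\le 1/3$ with equality only at $h=1$ match the paper's discussion preceding the theorem, so nothing is missing.
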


\subsection{For Lipschitz substrates}

Finally, for small $\alpha$, we give an example of a Lipschitz (singularly folding) substrate with large slope such that any minimizer must be overhanging for ``any'' small $\varepsilon$.
This kind of uniformity is mathematically important.
An intuitive meaning of this result has been given in Introduction.

We shall state it as a proposition.
Let $h>0$ and $0<2\delta<\min\{1,h\}$.
A $1$-periodic function $\phi$ is called $\delta$-Lipschitz fakir carpet of height $h$ if
\begin{eqnarray}
\phi(x):=\max\left\{0,h-\left|\frac{h}{\delta}x-\frac{1}{2}\right|\right\} \nonumber
\end{eqnarray}
for $x\in[0,1]$.
We also define the base and wall parts as well as the smooth case; namely, the base part is the part with $y=\phi(x)=0$ and the left (resp.\ right) part is the part with $y=\phi(x)$, $\delta<y<h-\delta$ and $\phi(x)'>0$ (resp.\ $\phi'(x)<0$).

\begin{theorem}\label{thmoverhanging2}
	Let $h>0$ and $\alpha<\Delta:=\frac{\min\{1,h\}}{1+2h}$.
	Then there exist $\bar{\varepsilon}>0$ and $\bar{\delta}>0$ such that, for any $0<\varepsilon<\bar{\varepsilon}$ and the $\delta$-Lipschitz fakir carpet substrate $\psi_\delta$ of height $h$ with any $0<\delta<\bar{\delta}$, any minimizer of (\ref{minimizingproblem}) is overhanging.
\end{theorem}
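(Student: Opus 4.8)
The plan is to mirror the proof of Theorem \ref{thmoverhanging}, reusing the overhanging competitor $\hat{\gamma}_\delta$ together with its bound $E[\hat{\gamma}_\delta]\leq(1+2h)\alpha+20\pi\varepsilon$ obtained exactly as in (\ref{curveenergy2}), now rounding the genuine corners of the Lipschitz carpet with circular arcs of radius $\varepsilon$. This competitor is admissible for all small $\delta,\varepsilon$ (even when $\delta\ll\varepsilon$, since the tip arc may simply loop above the thin needle). The essential new difficulty is that the limit relation (\ref{infnonoverhang}) cannot hold uniformly in $\varepsilon$: for a fixed Lipschitz needle the Case~3 estimate $E[\gamma]\geq 2\varepsilon^2 f(\theta_\delta)^2/\delta$ coming from Lemma \ref{lemma1} degenerates as $\varepsilon\downarrow0$. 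Hence, instead of comparing non-overhanging curves with $\min\{1,h\}$, I would compare them \emph{directly} with $\hat{\gamma}_\delta$, so that the common adhesion cost cancels and only the $O(\varepsilon)$ ``cost of crossing the tip'' must be compared.

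Cases~1 and~2 of Theorem \ref{thmoverhanging} carry over verbatim, giving $E[\gamma]\geq\min\{1,h\}-2\delta$, which already exceeds $E[\hat{\gamma}_\delta]$ once $2\delta+20\pi\varepsilon<(1+2h)(\Delta-\alpha)$; this only constrains $\bar{\delta},\bar{\varepsilon}$ and is uniform (a curve touching both walls but not the base falls into Case~1). Thus the heart of the matter is a non-overhanging curve touching the base and both walls, which therefore climbs from the base over the tip and back. I would record that it adheres to the two straight walls up to heights $h-a$ and $h-b$, leaving them with tangential angles $\pm\theta_\delta$ where $\tan\theta_\delta=h/\delta$, and makes a free excursion over the tip. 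Two facts drive the estimate: being a graph that must clear the peak forces the excursion to rise by $a$ and fall by $b$, so its free length is at least $u:=a+b$; and, since the horizontal width between the two wall-contact points equals $\delta u/h$ and the endpoint angles are $\pm\theta_\delta$, Lemma \ref{lemma1} gives $\int_\gamma\kappa^2\,ds\geq 4f(\theta_\delta)^2 h/(\delta u)$.

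Combining these with the adhesion lengths yields
\[
E[\gamma]\geq\alpha\Big[(1-2\delta)+\tfrac{2h-u}{\sin\theta_\delta}\Big]+u+\frac{4\varepsilon^2 f(\theta_\delta)^2 h}{\delta u},
\]
while $E[\hat{\gamma}_\delta]\leq\alpha\big[(1-2\delta)+2h/\sin\theta_\delta\big]+20\pi\varepsilon$. Subtracting, the full-wall adhesion cancels, and using $\theta_\delta\to\pi/2$ so that $1-\alpha/\sin\theta_\delta>(1-\alpha)/2>0$ for small $\delta$, the arithmetic--geometric mean inequality gives
\[
E[\gamma]-E[\hat{\gamma}_\delta]\geq\tfrac{1-\alpha}{2}\,u+\frac{4\varepsilon^2 f(\theta_\delta)^2 h}{\delta u}-20\pi\varepsilon\geq\varepsilon\Big(4f(\theta_\delta)\sqrt{\tfrac{(1-\alpha)h}{2\delta}}-20\pi\Big).
\]
The decisive point is that $\varepsilon$ factors out: the bracket is positive as soon as $\delta$ is small enough that $4f(\theta_\delta)\sqrt{(1-\alpha)h/(2\delta)}>20\pi$, a condition on $\delta$ alone. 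Choosing $\bar{\delta}$ to enforce this together with the Case~1--2 requirement, and $\bar{\varepsilon}$ accordingly, makes every non-overhanging curve strictly more expensive than $\hat{\gamma}_\delta$ for all $0<\varepsilon<\bar{\varepsilon}$, which is the claimed uniformity.

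The main obstacle is the rigorous reduction of an arbitrary non-overhanging Case~3 curve to the clean ``adhere--climb--arch--descend'' picture used above. One must justify, writing $\alpha L_B+L_F=\alpha L+(1-\alpha)L_F$, that any extra free portions over the base only increase the energy (so the base still contributes at least $\alpha(1-2\delta)$), isolate a single free excursion straddling the peak to which Lemma \ref{lemma1} applies with the correct width $\delta u/h$ and endpoint angles $\pm\theta_\delta$, and control the rounding near the bottom corners --- all with constants independent of $\varepsilon$. Once this bookkeeping is in place, the $\varepsilon$-homogeneous comparison above yields the theorem.
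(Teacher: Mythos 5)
Your proposal runs on exactly the paper's engine: Lemma \ref{lemma1} applied to the free excursion straddling the tip, an adhesion/length term linear in the excursion size, and the arithmetic--geometric mean inequality so that $\varepsilon$ factors out with a coefficient diverging as $\delta\downarrow0$ (the paper writes it as $2\sqrt{4f(\theta_\delta)^2(1-\alpha)/\cos\theta_\delta}$; since $\cos\theta_\delta\sim\delta/h$ this agrees with your $4f(\theta_\delta)\sqrt{(1-\alpha)h/(2\delta)}$ up to constants), beaten against an $O(\varepsilon)$ overhanging loop. The genuine difference is the comparison scheme, and it sits exactly where your admitted gap is. The paper never compares a Case-3 curve with the global competitor $\hat\gamma_\delta$: it modifies the given $\gamma$ itself only on $[t_1,t_2]$, between a left-wall and a right-wall touch (Figure \ref{figLipchitzproof2}), so outside $(t_1,t_2)$ the two curves coincide and all base adhesion cancels identically; and since $\gamma(t_1),\gamma(t_3)$ lie on the \emph{same straight wall}, the elementary bound $E[\gamma|_{[t_1,t_3]}]\geq\alpha\,\mathrm{dist}(\gamma(t_1),\gamma(t_3))$ already equals the competitor's wall-adhesion cost --- no corner-cutting can occur because the chord \emph{is} the wall. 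This local surgery makes your ``adhere--climb--arch--descend'' reduction, the base estimate $\alpha(1-2\delta)$, and the bottom-corner control simply unnecessary. Your version is repairable (free detours pay potential $1$ while geodesic shortcuts in this geometry save at most a bounded factor, and $\alpha<1/3$ leaves room), but it is real unexecuted work, so I would adopt the anchored comparison. Two smaller points: the paper's competitor requires $\theta_\delta>\pi/3$ and $\varepsilon<\delta/3$, so your claim that the tip loop works even for $\delta\ll\varepsilon$ is asserted rather than checked; and you should define $t_3,t_4$ as the last/first contacts strictly left/right of the tip, so the endpoint angles are genuinely $\pm\theta_\delta$, $y'$ vanishes in between by the intermediate value theorem, and vertex touches fall harmlessly inside $(t_3,t_4)$.
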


\begin{figure}[tb]
	\begin{center}
		\includegraphics[width=40mm]{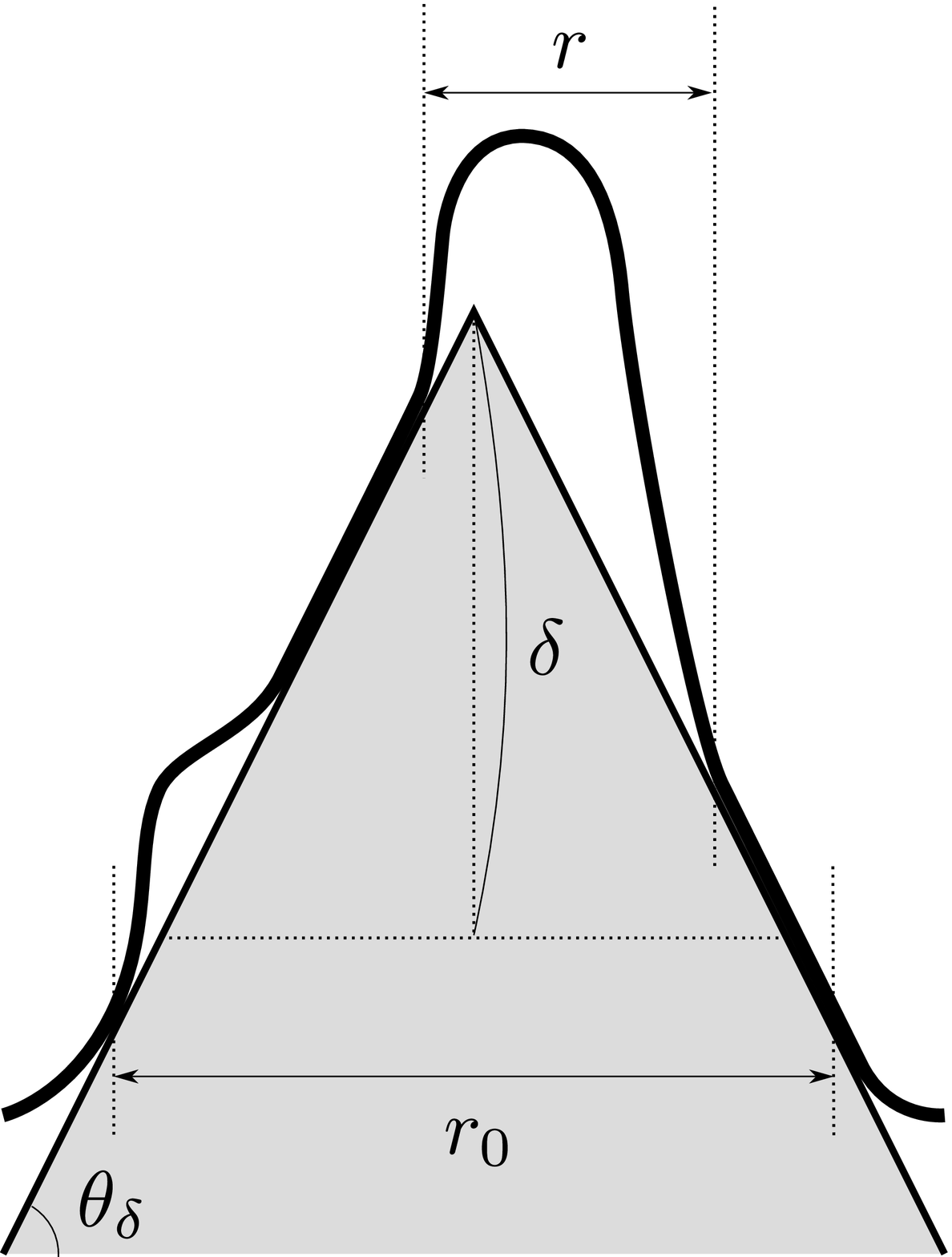}
		\caption{Non overhanging curve $\gamma$ touching the left and right wall parts.
			There are two touching points $\gamma(t_1)$ and $\gamma(t_2)$ of height less than $h-\delta$.
			There are also two points such that $\gamma$ is tangent to the left and right slope there but does not touch $\psi_\delta$ between them.}
		\label{figLipchitzproof1}
	\end{center}
	\begin{center}
		\includegraphics[width=40mm]{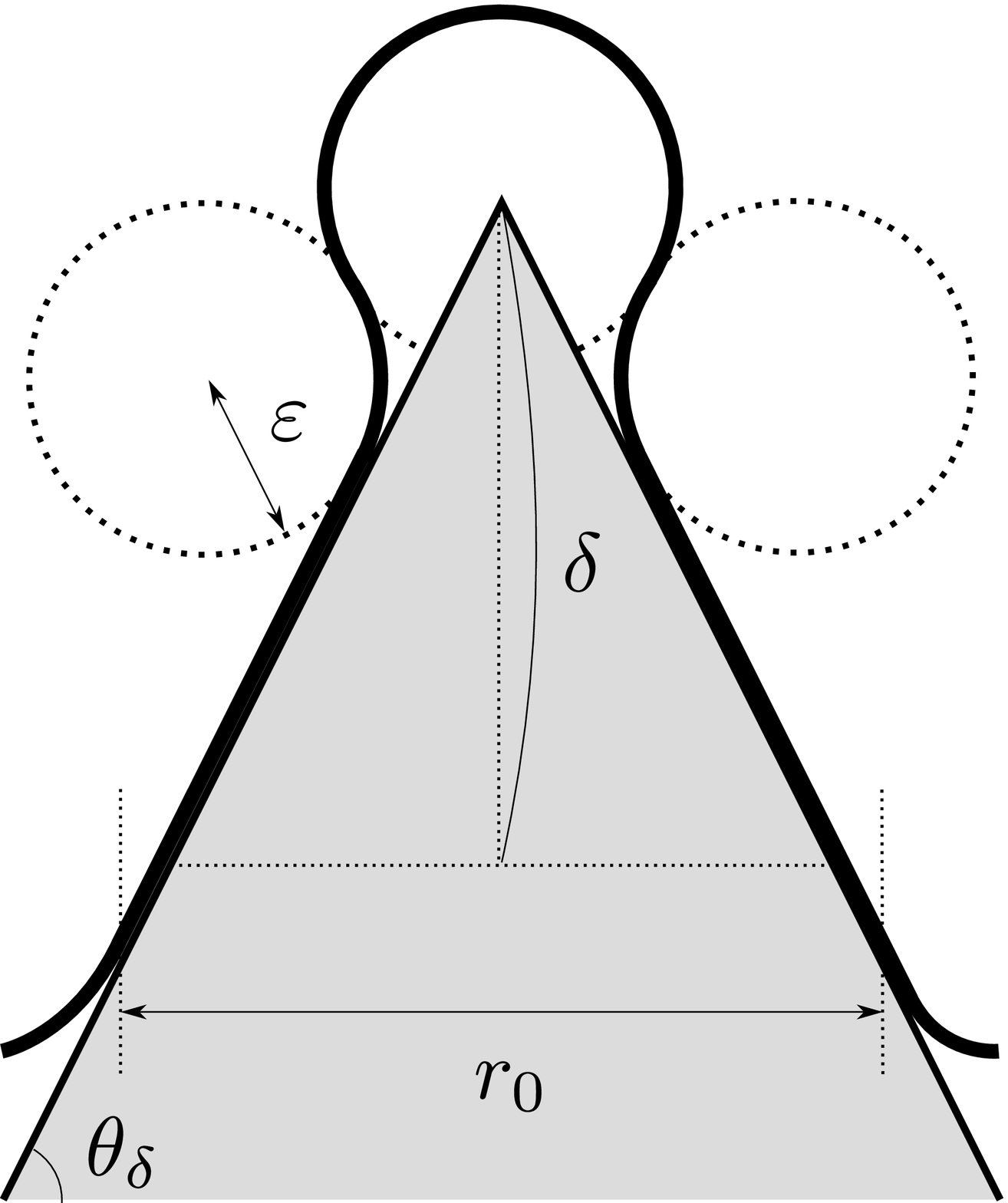}
		\caption{The curve of Figure \ref{figLipchitzproof1} modified from $\gamma(t_1)$ to $\gamma(t_2)$.
			The modified part consists of straightly adhering parts and a freely bending part with radius $\varepsilon$.
			This curve is well-defined and not self-intersecting whenever $\varepsilon\ll\delta$ and moreover overhanging whenever $\theta_\delta>\pi/3$.}
		\label{figLipchitzproof2}
	\end{center}
\end{figure}

\begin{proof}
	Noting the condition of $\alpha$, in the same way as Case 1 and Case 2 in Theorem \ref{thmoverhanging}, we see that there are $\delta_0>0$ and $\varepsilon_0>0$ such that, for any $0<\delta<\delta_0$ and $0<\varepsilon<\varepsilon_0$, any non-overhanging curve is necessary to touch the left and right wall parts in order to minimize $E$.
	Note that for arbitrary small $\varepsilon>0$ an overhanging competitor as in Figure \ref{figsmoothoverhang} is well-defined since the substrate $\psi_\delta$ is folding singularly.
	
	To complete the proof, we shall prove that, for any small $\delta$, $\varepsilon$, and any non-overhanging $\gamma$ touching both the wall parts, there is an overhanging competitor $\hat{\gamma}$ such that $E[\gamma]>E[\hat{\gamma}]$.
	
	Fix arbitrary $0<\delta<\delta_0$ and $0<\varepsilon<\varepsilon_0$ and take any non-overhanging $\gamma=(x,y)\in\mathcal{A}$ touching both the wall parts.
	Then there are times $t_1<t_2$ such that $\gamma$ touches the left (resp.\ right) wall part at $t_1$ (resp.\ $t_2$).
	Define $t_3\in[t_1,t_2]$ (resp.\ $t_4\in[t_1,t_2]$) as the supremum (resp.\ infimum) of time $t\in[t_1,t_2]$ such that $\gamma(t)$ touches $\psi_\delta$ and $x(t),1/2$ (resp.\ $x(t)>1/2$).
	Note that $\theta(t_3)=-\theta(t_4)=\theta_\delta$, where $\theta_\delta>0$ denotes the slope angle of $\psi_\delta$.
	Moreover, in $(t_3,t_4)$ the curve $\gamma$ does not touch $\psi_\delta$ except at the vertex $(1/2,\psi_\delta(1/2))$.
	Denote $r_0=x(t_2)-x(t_1)$ and $r=x(t_4)-x(t_3)$ as in Figure \ref{figLipchitzproof1}.
	
	Then, by Lemma \ref{lemma1} and the fact that $\gamma$ circumvents the vertex of $\psi_\delta$ freely (except the vertex), the energy $E$ of the part of $\gamma$ from $t_3$ to $t_4$ is bounded below as
	\begin{eqnarray}
		E[\gamma|_{[t_3,t_4]}]\geq\varepsilon^2\frac{4f(\theta_\delta)^2}{r}+\frac{r}{\cos\theta_\delta}. \nonumber
	\end{eqnarray}
	In addition, the part from $t_1$ to $t_3$ and from $t_4$ to $t_2$ is totally bounded below as
	\begin{eqnarray}
		E[\gamma|_{[t_1,t_3]}]+E[\gamma|_{[t_4,t_2]}]\geq\alpha\cdot\frac{r_0-r}{\cos\theta_\delta} \nonumber
	\end{eqnarray}
	since the energies of $\gamma|_{[t_1,t_3]}$ and $\gamma|_{[t_4,t_2]}$ are more than or equal to the energies of the completely adhering straight lines joining the endpoints of $\gamma|_{[t_1,t_3]}$ and $\gamma|_{[t_4,t_2]}$, respectively.
	Therefore, the part of $\gamma$ from $t_1$ to $t_2$ is bounded below as
	\begin{eqnarray}
		E[\gamma|_{[t_1,t_2]}]&\geq& \frac{\varepsilon^24f(\theta_\delta)^2}{r}+\frac{r}{\cos\theta_\delta}+\frac{\alpha(r_0-r)}{\cos\theta_\delta} \nonumber\\ 
		&=&  \varepsilon^24f(\theta_\delta)^2\frac{1}{r}+\frac{1-\alpha}{\cos\theta_\delta}r+\frac{\alpha r_0}{\cos\theta_\delta} \nonumber\\
		&\geq&  \left(2\sqrt{4f(\theta_\delta)^2\frac{1-\alpha}{\cos\theta_\delta}}\right)\varepsilon+\frac{\alpha r_0}{\cos\theta_\delta}, \nonumber
	\end{eqnarray}
	which does not depend on $r$.
	
	On the other hand, providing that $\delta$ and $\varepsilon$ are sufficiently small as $\theta_\delta>\pi/3$ and $\varepsilon<\delta/3$, the competitor $\hat{\gamma}$ constructed by modifying $\gamma$ in $(t_1,t_2)$ as in Figure \ref{figLipchitzproof2} is well-defined and overhanging.
	The energy of $\hat{\gamma}$ from $t_1$ to $t_2$ is bounded above as
	\begin{eqnarray}
		E[\hat{\gamma}|_{[t_1,t_2]}] &\leq& 6\pi\varepsilon\left[\varepsilon^2\frac{1}{\varepsilon^2}+1 \right] +\alpha\cdot\frac{r_0}{\cos\theta_\delta} \nonumber\\
		&=& 12\pi\varepsilon + \frac{\alpha r_0}{\cos\theta_\delta}. \nonumber
	\end{eqnarray}
	In the outside of $(t_1,t_2)$, the curves $\hat{\gamma}$ and $\gamma$ coincide.
	
	Consequently, noting that for any small $\delta>0$
	\begin{eqnarray}
	2\sqrt{4f(\theta_\delta)^2\frac{1-\alpha}{\cos\theta_\delta}} > 12\pi, \nonumber
	\end{eqnarray}
	we have $E[\gamma]>E[\hat{\gamma}]$ for any small $\delta$ and $\varepsilon$.
	The proof is now complete.
\end{proof}

\section{Discussion}\label{sectdiscuss}

In this section, we give some further remarks and discussions.

\subsection{Small bending scale}

We first discuss the graph representations of minimizers for small bending scale $\varepsilon$.
Recall that Theorem \ref{thmnooverhang} states that large bending scale $\varepsilon\gg1$ implies the graph representation independently of $\psi$.
This theorem is relatively easy to prove since, if $\varepsilon\gg1$, the periodic boundary condition is effective and hence non-graph curves must have large energies.
On the other hand, the case that $\varepsilon\ll1$ is not easy to obtain the graph representation rigorously since there is no large difference in the energies of graph and non-graph curves.
In fact, Theorem \ref{thmoverhanging} and Theorem \ref{thmoverhanging2} state that overhanging minimizers exist when $\varepsilon$ is small and the minimal bending scale $r=\|\psi''\|_\infty^{-1}$ of $\psi$ is much smaller.
However, we may expect the graph representation when $\varepsilon\ll r$ by the following formal observation (cf.\ \cite{Mi16}).
When $\psi$ is smooth and $\varepsilon=0$, minimizers would be Lipschitz functions with straight free parts as in Figure \ref{figvalleyfold}.
They have edge singularities at the contact points as valley folds and the minimal distance $d$ among the singularities is bounded below (by a constant depending on $\alpha$ and $\psi$).
Thus, for small $\varepsilon\ll \min\{d,r\}$, any minimizer would be obtained by modifying such valley folds smoothly as in Figure \ref{figvalleyfold}.
Moreover, in contrast to mountain folds (Figure \ref{figmountainfold}), the modification of valley folds would not require to increase the slopes.
Hence any minimizer is expected to be a graph curve.

\begin{figure}[tb]
	\begin{center}
		\includegraphics[width=60mm]{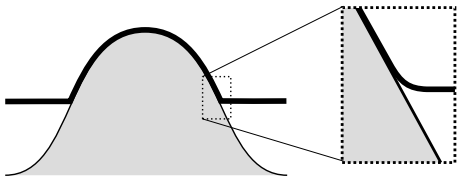}
		\caption{Minimizer on a smooth substrate for small $\varepsilon$.
			If $\varepsilon=0$ then the minimizer has valley fold singularities as the left.
			When $\varepsilon$ is small the singularities would be modified as smooth curves with scale $\varepsilon$ as the right.}
		\label{figvalleyfold}
	\end{center}
\end{figure}

\subsection{Flat substrates}

Theorem \ref{thmnooverhang2} states that a second order flatness of $\psi$ implies the graph representations of minimizers; for any $\varepsilon>0$ and $0<\alpha<1$ there is $k>0$ such that if $\psi$ satisfies $\|\psi''\|_\infty\leq k$ then any minimizer is a graph curve.
The problem would become more difficult if we replace $\psi''$ with $\psi'$ or $\psi$.

Another interesting problem is the following uniform and strong version:
is there $k>0$ such that for any $\varepsilon$, $\alpha$, and $\psi$ with $\|\psi''\|_\infty\leq k$ or $\|\psi'\|_\infty\leq k$ any minimizer is a graph curve?
Notice that Theorem \ref{thmoverhanging2} states that any smallness of $\|\psi\|_\infty$ does not imply the above conclusion.

\subsection{On self-intersections}

Self-intersections are more difficult to occur than overhangs in the sense that any self-intersecting curve in $\mathcal{A}$ must be overhanging.
Here we say that $\gamma\in\mathcal{A}$ has a self-intersection if there are $0\leq t_1<t_2<1$ and $m\in\mathbb{Z}$ such that $\gamma(t_1)=\gamma(t_2)+(m,0)\in\mathbb{R}^2$.
This definition is suitable for our periodic setting; if we take $\tilde{\gamma}\in H^2_{loc}(\mathbb{R};\mathbb{R}^2)$ such that $\tilde{\gamma}(t+1)=\tilde{\gamma}(t)+(1,0)$ for any $t\in\mathbb{R}$, then $\tilde{\gamma}$ is not injective if and only if the restriction $\gamma=\tilde{\gamma}|_I\in\mathcal{A}$ has a self-intersection in the above sense.
This paper proves the existence of overhanging minimizers, but it is not clear whether there exist self-intersecting minimizers in our setting.

Our admissible curves $\gamma\in\mathcal{A}$ may have any self-intersection (self-contact and self-crossing) as mentioned, and thus our problem would be a priori suitable only for filaments but not membranes.
To make our problem compatible with membranes, we especially need to exclude self-crossing curves.
To this end, let $\mathcal{A}_*\subset\mathcal{A}$ be the $H^2$-weak closure of the set of curves without self-intersection.
This set is compatible with membranes since $\mathcal{A}_*$ consists of non self-intersecting curves and limits of such curves at least in $C^1$; especially, any curve of $\mathcal{A}_*$ is not self-crossing but only self-contacting.
Then all the results in this paper are valid even if we replace $\mathcal{A}$ with $\mathcal{A}_*$ in the problem (\ref{minimizingproblem}) since all the competitors used in our proof have no self-intersection.
The existence of minimizers is proved in the same way as \ref{appendix1} since $\mathcal{A}_*$ is $H^2$-weakly closed.

We mention that this kind of self-contact setting has been considered in e.g.\ \cite{DaMaNo15,DoMuRo11}.
In particular, the paper \cite{DaMaNo15} proves that, for confined closed free elasticae, (i) any convex confinement admits only convex minimizers, which especially have no self-intersection, and (ii) there is a confinement with two halls which admits a self-contacting minimizer.
These results indicate that whether minimizers have self-contacts crucially relates to the simplicity of confinements.
Our graph confinements are simple but the effect of adhesion make curves easier to form complicated shapes, and hence the self-intersection problem becomes more nontrivial.

\subsection{Local minimizers}

We next give a theoretical discussion on local minimizers.
A curve $\gamma_0\in\mathcal{A}$ is called a local minimizer if there is $\delta>0$ such that $E[\gamma]\geq E[\gamma_0]$ for any $\gamma\in\mathcal{A}$ with $\|\gamma-\gamma_0\|_{H^2}\leq \delta$.

A straight line not touching $\psi$ is obviously a graph local minimizer in any case.
However, the straight line touching $\psi$ is not necessarily a local minimizer.
The existence of graph local minimizers touching $\psi$ is not trivial.

Moreover, it is shown that there are infinitely many self-intersecting local minimizers in $\mathcal{A}$ by using a kind of winding number.
For $\gamma\in\mathcal{A}$, the winding number $N_\gamma\in\mathbb{Z}$ is defined so that $2\pi N_\gamma$ is equal to the total curvature, or equivalently
\begin{eqnarray}
2\pi N_\gamma=\theta(1)-\theta(0), \nonumber
\end{eqnarray}
where $\theta:\overline{I}\to\mathbb{R}$ is a continuous representation of tangential angle (unique up to addition by a constant of $2\pi\mathbb{Z}$).
The winding number is obviously continuous with respect to the $C^1$-topology, and hence continuous with respect to the weak and strong $H^2$-topologies.
We denote by $\mathcal{A}_n\subset\mathcal{A}$ the set of all curves with $N_\gamma=n$.
Then, since $N_\gamma$ is discrete-valued and (weakly and strongly) continuous on $\mathcal{A}$, for any $n\in\mathbb{Z}$ the set $\mathcal{A}_n$ is open and closed in $\mathcal{A}$ with respect to both the weak and strong $H^2$-topologies.
Since $\mathcal{A}_n$ is weakly closed, in the same way as \ref{appendix1}, we can prove that for any $n\in\mathbb{Z}$ there is a minimizer of $E$ among $\mathcal{A}_n$.
Then, since $\mathcal{A}_n$ is strongly open, it turns out that such a minimizing curve is nothing but a local minimizer in the whole space $\mathcal{A}$.
Any curve with $N_\gamma\neq0$ has a self-intersection, and thus there are infinitely many self-intersecting (and overhanging) local minimizers in $\mathcal{A}$.
In the membrane setting $\mathcal{A}_*$, the above argument does not work since the winding number of any curve is zero, and thus the existence of overhanging local minimizers is nontrivial.

\subsection{Periodic boundary condition}

We finally give a brief remark on periodicity.
This paper assumes that admissible curves and minimizers have a same period, but the paper \cite{Ke16} proposes a numerical example of a global minimizer of a period several times a substrate period.
Hence, physically, a more natural assumption is that, if an original substrate has a period $\lambda$, then a minimizer $\gamma$ has the period $n\lambda$ for a positive integer $n$.
It is not easy to determine $n$ for a general case.
However, in terms of scale, our assumption would be formally justified.
In fact, the elasto-capillary length scale $\ell=\sqrt{C/\sigma_F}$ may be interpreted as the optimal bending scale of a minimizer (as in \cite{HuRoBi11,RoBi10}), and thus we would formally expect that a minimizing curve crosses over several periods of a substrate if and only if the scales $\ell$ and $\lambda$ balance each other out ($\ell\sim\lambda$), where $\lambda$ is the original substrate period (wavelength).
In our normalized setting, this balance is described as $\varepsilon\sim 1$.
The main concerns in this paper are the cases that $\varepsilon\gg1$ and $\varepsilon\ll1$ (even though our results give more precise conditions), and hence our periodic assumption would not be restrictive from this viewpoint.
The case that $\varepsilon\sim1$ is of course more interesting and challenging, but this paper is a first step and does not address the precise analysis.

\section{Conclusion}\label{sectconclusion}

We provided a first rigorous study on the graph representations of global minimizers (ground states) for the one-dimensional energy minimizing problem (\ref{minimizingproblem}), under the assumption that admissible curves and substrates have same periods.
We obtained ranges of some characteristic parameters ensuring the presence and absence of overhangs, respectively.
All the results are valid for both the filament setting (i.e., any self-intersection is allowed) and membrane setting (i.e., only self-contacts are allowed).
Our results do not give optimal conditions, but reveal which parameters are of importance in the mechanism of overhangs.
In particular, we found that the presence of overhangs crucially depends the precise relation between adhesiveness, normalized elasto-capillary length, ``deviation of a hall'', and ``sharpness of a mountain''.
This paper dealt with only special substrates to prove the presence of overhangs, so one future direction would be to understand the mechanism of overhangs for more general substrates.

\section*{Acknowledgments}

The author would like to thank Professor Olivier Pierre-Louis for his comments and discussion.
The author is grateful to the anonymous referees for their suggestions to improve the manuscript.
This work was supported by a Grant-in-Aid for JSPS Fellows 15J05166 and the Program for Leading Graduate Schools, MEXT, Japan.

\appendix

\section{Proof of existence}
\label{appendix1}

We confirm the existence theorem (Theorem \ref{thmexistence}) by a direct method in the calculus of variations.

\begin{proof}
	We first note that $\inf_\mathcal{A}E\leq\sigma_F$ by (\ref{energybound}) and the case $\inf_\mathcal{A}E=\sigma_F$ is trivial since a trivial straight line competitor is nothing but a minimizer.
	Thus we may assume that $\inf_\mathcal{A}E<\sigma_F$.
	
	Take a minimizing sequence $\{\gamma_n\}_n\subset\mathcal{A}$ such that
	\begin{eqnarray}
	\sigma_F>E[\gamma_n]\to\inf_\mathcal{A}E\ (\geq\sigma_B). \nonumber
	\end{eqnarray}
	Without loss of generality, we may assume that all the curves are of constant speed.
	In this case, the total energy of $\gamma_n$ is represented as
	\begin{eqnarray}
	E[\gamma_n]=\frac{C}{2L_{\gamma_n}^3}\int_I |\ddot{\gamma}_n(t)|^2dt + L_{\gamma_n}\int_I \sigma(\gamma_n(t))dt. \nonumber
	\end{eqnarray}
	
	Now we obtain the boundedness of $\{\gamma_n\}_n$ in $H^2(I;\mathbb{R}^2)$.
	Since $L_{\gamma_n}\sigma_B\leq E[\gamma_n]$, the sequence $\{L_{\gamma_n}\}_n$ is bounded.
	Thus, since $\gamma_n$ is of constant speed, the sequence of $\|\dot{\gamma}_n\|_2=L_{\gamma_n}$ is also bounded.
	Moreover, since $\frac{C}{2L_{\gamma_n}^3}\|\ddot{\gamma}_n\|_2^2\leq E[\gamma_n]$, the sequence of $\|\ddot{\gamma}_n\|_2$ is also bounded.
	Finally, since $E[\gamma_n]<\sigma_F$, we see that all the curves $\gamma_n$ must touch $\partial\Omega$.
	Combining this fact with the uniformly boundedness of length and the periodic boundary condition, we find that the sequence of $\|\gamma_n\|_\infty$ is bounded, and thus the sequence of $\|\gamma_n\|_2$ is also bounded.
	Therefore, the sequence $\{\gamma_n\}_n$ is bounded in $H^2(I;\mathbb{R}^2)$.
	
	Noting that $H^2(I;\mathbb{R}^2)$ is compactly embedded in $C^1(\bar{I};\mathbb{R}^2)$, there exists $\gamma\in H^2(I;\mathbb{R}^2)$ such that, up to a subsequence (not relabeled), $\gamma_n$ converges to $\gamma$ in $C^1$ and weakly in $H^2$.
	Notice that $\gamma\in\mathcal{A}$, the curve $\gamma$ is of constant speed, and $L_{\gamma_n}\to L_\gamma\geq1$.
	It only remains to prove $\liminf_{n\to\infty}E[\gamma_n]\geq E[\gamma]$.
	The lower semicontinuity of $\sigma$ and Fatou's lemma imply
	\begin{eqnarray}
	\liminf_{n\to\infty}\int_I \sigma(\gamma_n(t))dt \geq \int_I \sigma(\gamma(t))dt. \nonumber
	\end{eqnarray}
	Moreover, $\liminf_{n\to\infty}\|\ddot{\gamma}_n\|_2 \geq \|\ddot{\gamma}\|_2$ holds since $\ddot{\gamma}_n\to\ddot{\gamma}$ weakly in $L^2$.
	Noting the convergence of length, we obtain the lower semicontinuity of $E$.
	Consequently, the curve $\gamma$ is a minimizer.
\end{proof}

\end{document}